\newtheorem{mydef}{Definition}
\newtheorem{mylemma}{Lemma}
\newtheorem{myprop}{Proposition}
\newtheorem{mytheorem}{Theorem}
\DeclarePairedDelimiter\ceil{\lceil}{\rceil}
\journal{Journal of \LaTeX\ Templates}
\begin{document}

\begin{frontmatter}

\title{A Co\mbox{-}Operative Game Theoretic Approach for the Budgeted Influence Maximization Problem
}

\author{Suman Banerjee\fnref{1234}}
\address{Department of Computer Science and Engineering,\\
Indian Institute of Technology, Jammu \\
Jammu \& Kashmir-181221, India.\\
\ead{suman.banerjee@iitjammu.ac.in}}
\cortext[mycorrespondingauthor]{Corresponding author}


%


\begin{abstract}
Given a social network of users with selection cost, the \textsc{Budgeted Influence Maximization Problem} (\emph{BIM Problem} in short) asks for selecting a subset of the nodes (known as \emph{seed nodes}) within an allocated budget for initial activation to maximize the influence in the network. In this paper, we study this problem under the \emph{co\mbox{-}operative game theoretic} framework. We model this problem as a co\mbox{-}operative game where the users of the network are the players and for a group of users, the expected influence by them under the \emph{Maximum Influence Arborences} diffusion model is its utility. We call this game as \emph{BIM Game} and show this is `non-convex' and `sub-additive'. Based on the proposed game\mbox{-}theoretic model and using the solution concept called `Shapley Value', we propose an iterative algorithm for finding seed nodes. The proposed methodology is divided into mainly two broad steps: the first one is computing the approximate marginal gain in \emph{Shapley Value} for all the nodes of the network, and the second one is selecting seed nodes from the sorted list until the budget is exhausted. We also show that the proposed methodology can even be more effective when the community structure of the network is exploited. The proposed methodologies have been implemented, and an extensive set of experiments have been conducted with three publicly available social network datasets. From the experiments, we observe that the seed set selected by the proposed methodologies lead to more number of influence nodes compared to many standard and baseline methods from the literature with a reasonable computational overhead. In particular, if the community structure of the network is exploited then there is an increase upto $2 \%$ in number of influenced nodes.
\end{abstract}

\begin{keyword}
Budgeted Influence Maximization Problem \sep Social Network \sep Influence Probability \sep Seed Set \sep Co\mbox{-}operative Game \sep Shapley value.
\end{keyword}

\end{frontmatter}


\section{Introduction}
The \emph{Influence Maximization Problem} aims at identifying a small set of highly influential users such that their initial activation leads to the maximum number of influential nodes in the network~ \cite{kempe2003maximizing}. This is perhaps one of the most well studied problem in computational social network analysis due to its practical applications in different domains including \emph{crowed\mbox{-}sourcing}~ \cite{hossain2012users}, \emph{viral marketing}~ \cite{domingos2005mining}, \emph{computational advertisement}~ \cite{aslay2015viral}, \emph{recommender systems}~ \cite{ye2012exploring} etc. This problem was initially studied by ~\cite{kempe2003maximizing} and since then there was an extensive effort to study in and around of this problem. Look into \cite{banerjee2020survey, li2018influence} for recent surveys on this topic. The main cause of influence is the diffusion of information and it is basically a cascading effect by which information propagates from one part of the network to the other. To study this process, several diffusion models have been introduced, such as \emph{Independent Cascade Model (IC Model)}, \emph{Linear Threshold Model (LT Model)}, \emph{Maximum Influence Arborences Model (MIA Model)} and many more~\cite{li2017survey}.
\par In reality, social networks are formed by rational human agents. This means that if an user is selected as a seed then incentivization is required. However, the basic influence maximization problem assumes that the every users of the network has equal selection cost. though it may not be so in reality. To bridge this gap,~\cite{nguyen2013budgeted} introduced the  `Budgeted Influence Maximization Problem', where the users of the  social network are associated with a nonuniform selection cost and a fixed budget is allocated, the goal here is to select a subset of nodes within the allocated budget such that their initial activation leads to maximum number of influenced nodes. As compared to the basic influence maximization problem, existing literature of this problem is very very limited. In this paper, we study this problem under the co-operative game theoretic framework.
\par \emph{Game Theory}, which is basically a mathematical study of strategic interaction among a group of rational agents, has been used to solve many problems in the domain of social network analysis such as \emph{community detection}~\cite{chen2010game}, \emph{opinion dynamics}~ \cite{ding2010evolutionary}, \emph{leader selection}~ \cite{zimmermann2005cooperation}, \emph{rumor dissemination}~ \cite{kostka2008word}, \emph{influence maximization}~ \cite{clark2011maximizing} and many more. However, to the best of our knowledge, the BIM Problem has not been studied yet under the game theoretic framework. In this paper, we study the BIM Problem under  co\mbox{-}operative game theoretic framework and our study is motivated by the work  \cite{narayanam2010shapley}. However, there are the following fundamental differences:
\begin{itemize}
\item ~\cite{narayanam2010shapley}'s study is concentrated on the basic influence maximization problem and the non\mbox{-}uniform nature of the selection cost has not been taken into account, which is a practical concern.
\item In their study they have used IC and LT as the underlying diffusion model. However, recently there are several studies on influence maximization that considers MIA Model as the underlying model of diffusion \cite{ke2018finding}. In this study also, we have used the MIA as the underlying diffusion model.
\item It is well known that the complexity of computing the shapley value of a $n$ player co-operative game is of $\mathcal{O}(n!)$ ~\cite{narahari2014game}. Even in case of a small size social network (e.g., number of nodes are $100$) this is a huge computational burden. To get rid of this, they have randomly chosen linear number of permutations of the players for computing shapley value~\cite{narayanam2010shapley}. However, with the help of an existing result by ~\cite{maleki2013bounding}, we use the appropriate number of samples such that shapley value can be computed with bounded error with high probability.
\end{itemize}
In particular, we make the following contributions in this paper: 
\begin{itemize}
\item We formulate a co\mbox{-}operative diffusion game for the \textsc{Budgeted Influence Maximization Problem}, and design an iterative algorithm based on the solution concept of a co-operative game known as Shapley value for identifying the influential nodes. 
\item We study the important properties of the formulated game and detailed complexity analysis of the proposed methodologies have also been done.
\item We also show that if we consider the community structure of the network, then the proposed methodology can leads to even more number of influenced nodes.
\item The proposed methodologies have been with four publicly available social network datasets and an extensive set of experiments have been carried out to demonstrate the effectiveness of the proposed methodology.  
\end{itemize}
Rest of the paper is organized as follows: Section \ref{Sec:RW} describes relevant studies from the literature. Section \ref{Sec:Prelim and Prob} describes required preliminary concepts and define the diffusion game formally. The proposed solution methodologies have been described in Section \ref{Sec:Methodology}. Section \ref{Sec:EE} contains the experimental evaluation, and finally, Section \ref{Sec:CFD} concludes our study and gives future directions.

\section{Related Work} \label{Sec:RW}
 Our study is closely related to the `Influence Maximization in Social Networks', more particularly the Budgeted Influence Maximization Problem, and also game theoretic solution methodologies for the influence maximization problem. Here, we present relevant studies from the literature. 
\paragraph{\textbf{Influence Maximization in Social Networks}} The problem of influence maximization aims at choosing a small number of highly influential users in a social network for initial activation such that due to the cascading effect of diffusion, the number of influenced node is maximized ~\cite{kempe2003maximizing}. ~\cite{domingos2001mining, richardson2002mining} first introduced this problem for the `viral marketing' in social networks. Later on, ~\cite{kempe2003maximizing, kempe2005influential, DBLP:journals/toc/KempeKT15} studied the computational issues of this problem and showed that the problem is $\mathcal{NP}$\mbox{-}Hard under the \emph{Independent Cascade} and \emph{Linear Threshold} Model of diffusion. However, they gave a $(1-\frac{1}{e})$\mbox{-} factor approximation algorithm, which works based on maximum marginal influence gain computation and $\mathcal{O}(n^{1-\epsilon})$ for any $\epsilon > 0$ factor inapproximability result. Their study remains an influential one and triggers a huge amount of study in last one and half decades. Solution methodologies can be grouped into different categories, such as `approximation algorithms' such as \emph{CELF} ~\cite{leskovec2007cost}, \emph{CELF++} ~\cite{goyal2011celf}, \emph{MIA}~ \cite{wang2012scalable}, \emph{PMIA} ~\cite{wang2012scalable}, \emph{TIM} ~\cite{tang2014influence}, \emph{IMM} ~\cite{tang2015influence}; heuristic solutions such as \emph{SIMPATH} ~\cite{goyal2011simpath}, \emph{IRIE} ~\cite{jung2012irie}, LDGA ~\cite{chen2010scalable}, \emph{Community\mbox{-}Based Solution Methodologies} such as \emph{CIM} ~\cite{chen2012exploring, chen2014cim}, \emph{ComPath} \cite{rahimkhani2015fast}, \emph{INCIM} \cite{bozorgi2016incim}, \emph{CoFIM} \cite{shang2017cofim} and many more. Please refer to \cite{banerjee2020survey, hafiene2020influential,  li2018influence} (and references therein) for recent surveys.
\paragraph{\textbf{Budgeted Influence Maximization Problem}} In case of BIM Problem, along with the input social network, we are also given with a cost function that assigns selection cost to each node and a fixed budget is allocated for the seed set selection. \cite{nguyen2013budgeted} first introduced the BIM Problem and proposed a $(1-\frac{1}{\sqrt{e}})$\mbox{-}factor approximation algorithm and two efficient heuristics to solve this problem. Recently,  \cite{wang2020efficient} studied the BIM Problem and proposed a solution methodology that gives an approximation ratio of $\frac{1}{2}(1-\frac{1}{e})$. They further showed that this can be improved upto $(1-\beta)(1-\frac{1}{e})$. \cite{guney2019optimal} proposed an integer programming\mbox{-}based approach to solve this problem under the IC Model of diffusion.  \cite{han2014balanced} proposed a couple of heuristics for this problem that carefully considers both cost effective nodes and influential nodes. Recently, \cite{banerjee2019combim} proposed a community\mbox{-}based solution methodology for the BIM Problem which is broadly divided into four steps, namely, community detection, budget distribution, seed node selection, and budget transfer.  \cite{shi2019post} proposed two different solution methodologies with data dependent approximation ratio. \cite{yu2018fast} studied this problem under \emph{credit distribution} model and came up with a streaming algorithm with $(\frac{1}{3} - \epsilon)$ approximation of the optimum.  
\paragraph{\textbf{Game Theoretic Solution Methodologies for SIM and Related Problems}} Game theoretic techniques have been used to solve the  influence maximization problems since last one decade. To the best of our knowledge, the first study in this direction was by \cite{narayanam2010shapley}. They proposed a `Shapley Value'-based approach for selecting the seed nodes for the influential maximization problem. \cite{clark2011maximizing} studied the influence maximization problem in competitive situation and they formulated a \emph{Stackleberg Game} and proposed a methodology to solve the game. \cite{borodin2010threshold} studied the influence maximization problem in competitive situation with several natural extensions of the LT Model. \cite{angell2017don} studied the influence maximization problem and proposed a greedy heuristic that shows an improvement of $7 \%$ and $55 \%$ for submodular and non\mbox{-}submodular influence functions, respectively.  \cite{li2015getreal} study the influence maximization problem in competitive situation and proposed a game theoretic framework for this problem. They designed a non\mbox{-}cooperative game and proposed an algorithm that computes the `Nash Equilibrium' of the game that guarantees optimal strategies. Recently, \cite{wang2020fast} proposed an  cumulative oversampling technique for `Thompson Sampling' to construct optimistic parameter estimates with fewer samples. They showed that their learning algorithms can be used to solve the BIM Problem using less number of samples.
\par To the best of our knowledge, there does not exist any literature that studies BIM Problem under game theoretic setting. In this paper, we study this  Problem under Co\mbox{-}Operative Game Theoretic Framework. In particular, we propose an iterative approach based on the Shapley Value for selecting the influential nodes for the BIM Problem. We also show that the exploitation of community structure helps the proposed methodology to achieve even better influence spread.

\section{Preliminary Concepts, Background, and Problem Definition} \label{Sec:Prelim and Prob}
Here, we present some preliminary definitions. Initially, we start with the  social network specific ones.
\subsection{Social Network Specific}
\begin{mydef}[Social Network]
A social network is an interconnected structure among a group of human agents formed for social interactions, which is often represented as a  graph $G(V, E, P)$. Here, the vertex set $V(G)=\{u_1, u_2, \dots, u_n\}$ are the set of $n$ users, the edge set $E(G)=\{e_1, e_2, \dots, e_m\}$ are the $m$ social ties among the users, and $P: E(G) \rightarrow (0,1]$ is the edge weight function that assigns each edge to its influence probability.
\end{mydef}
We denote the number of nodes and edges of the network by $n$ and $m$, respectively. For any positive integer $n$, $[n]$ denotes the set $\{1,2, \cdots, n\}$. For any $e \equiv(uv) \in E(G)$, we denote the influence probability assigned to it as $P(e)$ or $P_{uv}$. If, $(uv) \notin E(G)$ then $P_{uv}=0$. Now, to start a diffusion process in the network, there should be some initial adopters, which we call as \emph{seed nodes}. To study the diffusion process in a network, several models have been introduced such as IC Model, LT Model, MIA Model and so on \cite{guille2013information}. Now, once a diffusion process is initiated from a number of seed nodes, it ends with influencing a subset of the nodes and called as the influence of the seed set, which is defined next.
\begin{mydef}[Influence of a Seed Set]
Given a social network $G(V, E, P)$ and a diffusion model $\mathcal{M}$ for a given seed set $S \subseteq V(G)$, its influence is defined as the number of nodes that are influenced due to the initial activation of the nodes in $S$ if the information is diffused by the rule of the model $\mathcal{M}$. $I_{\mathcal{M}}(S)$ denotes the the set of influenced nodes by the seed set $S$ under the diffusion model $\mathcal{M}$. This quantity is measured in terms of expactation. Hence, the influence of the seed set $S$ and under the diffusion model $\mathcal{M}$ is $\sigma_{\mathcal{M}}(S)=\mathbb{E}(|I_{\mathcal{M}}(S)|)$, where $\sigma(.)$ is the social influence function, i.e., $\sigma: 2^{V(G)} \rightarrow \mathbb{R}_{0}^{+}$ with $\sigma(\emptyset)=0$. Here, $\emptyset$ denotes the empty set.
\end{mydef}
Now, any real\mbox{-}world diffusion processes (such as political campaigns, viral marketing of products etc.) are always carried out to maximize the influence. In this direction, next we state the well studied Social Influence Maximization Problem.
\begin{mydef}[Social Influence Maximization Problem (SIM Problem)]\cite{kempe2003maximizing}
Given a social network $G(V, E, P)$, a diffusion model $\mathcal{M}$, and a positive integer ($k \in \mathbb{Z}^{+}$), the social influence maximization problem asks for selecting a seed set $S \subseteq V(G)$ with $|S| \leq k$, whose initial activation leads to the maximum number of influenced nodes, if the diffusion process happens by the rule of the model $\mathcal{M}$. Mathematically,
\begin{equation}
S^{*}=\underset{ S \subseteq V(G), |S| \leq k}{argmax} \  \sigma_{\mathcal{M}}(S)
\end{equation}
\end{mydef}
As mentioned previously, real\mbox{-}world social networks are formed by rational and self interested agents. If a node is selected as seed then some kind of incentivization is required. In reality, different users of the network have different selection cost, i.e., users have non uniform selection cost. However, the SIM Problem assumes that the selection costs are uniform. To bridge this gap, the `Budgeted Influence Maximization Problem' has been introduced by \cite{nguyen2013budgeted} which is defined next.
\begin{mydef}[\textsc{Budgeted Influence Maximization Problem}]
Given a social network $G(V, E, P)$ along with a cost function $C: V(G) \rightarrow \mathbb{Z}^{+}$ that assigns each node to its selection cost, a diffusion model $\mathcal{M}$, and a fixed budget $B$, the problem of budgeted influence maximization asks for selecting a seed set within the allocated budget, whose initial activation leads to the  maximum number of influenced nodes, if the diffusion process happens by the rule of the model $\mathcal{M}$. Mathematically,
\begin{equation}
S^{*}=\underset{S \subseteq V(G), C(S) \leq B}{argmax} \  \sigma_{\mathcal{M}}(S),
\end{equation}
where, $C(S)$ denotes the total selection cost of the users in $\mathcal{S}$, i.e., $C(S)=\underset{u \in S}{\sum} C(u)$. We denote an arbitrary instance of BIM Problem as $<G(V, E, P),C,B>$. 
\end{mydef}
From the algorithmic point of view the problem can be posed as follows:\\
\begin{center}
\begin{tcolorbox}[title=\textsc{Budgeted Influence Maximization Problem}, width=12cm]
\textbf{Input:} Social Network $G(V,E,P)$, Cost Function $C: V(G) \rightarrow \mathbb{Z}^{+}$, Budget $B$, and the Diffusion Model $M$.

\textbf{Problem:} Find out a seed set $S \subseteq V(G)$, such that $C(S) \leq B$ and $\sigma_{M}(S)$ is maximized.
\end{tcolorbox}
\end{center}

Next, we state the \emph{Maximum Influence Aroborance (MIA)} diffusion model, which we have considered as the underlying diffusion model. Symbols and notations used in this paper is given in Table \ref{Tab:Symbols}.
\begin{table}[h!]
  \begin{center}
    \caption{Symbols and notations used in this study}
    \label{Tab:Symbols}
    \begin{tabular}{|l|c|} 
    \hline
      \textbf{Symbol} & \textbf{Meaning} \\
      \hline
      $G(V,E,P)$& The input social network \\
      \hline
      $V(G)$, $E(G)$ & Vertex and Edge set of $G$ \\
      \hline
      $n,m$ & The number vertices and edges of $G$\\
      \hline 
      $P$ & The edge probability function \\
      \hline
      $K$ & The set of communities of $G$\\
      \hline
      $\ell$ & The number of communities of $G$\\
      \hline
      $P(e)/P_{uv}$ & The edge probability of the edge $e\equiv(uv)$\\
      \hline
      $\mathcal{P}(p)$ & Propagation probability of the path $p$\\
      \hline
      $C$ & The cost function\\
      \hline
      $C(u)$ & Cost of the user $u$\\
      \hline
      $B$ & Budget for seed set selection\\
      \hline
      $\sigma(.)$ & The social influence function\\
      \hline
      $S$ & The seed set\\
      \hline
      $C(S)$ & Total selection cost of the users in $S$\\
      \hline
      $\sigma(S)$ & The influence of the seed set $S$\\
      \hline
      $p_{u_iu_j}^{max}$ & Maximum influence path from $u_i$ to $u_j$\\
      \hline
      $\mathbb{P}(u_i,u_j)$ & Set of all paths between $u_i$ and $u_j$\\
      \hline
      $\theta$ & Cut Off Probability \\
      \hline
      $d_{max}$ & Maximum degree of $G$\\
      \hline
      $N$ & The set of players $\{1, 2, \ldots, n\}$\\
      \hline
      $\nu$ & The utility/ pay-off function\\
      \hline
      $(N, \nu)$ & The BIM Game \\
      \hline
      $\Psi_{i}(\mathcal{N}, \nu)$ & Shapley value of the $i$\mbox{-}th player\\
      \hline
      $MIIA(v, \theta)$ & The set of MIA Paths to the node $v$\\
      \hline
      $\mathbb{Z}^{+}$ & The set of positive integers\\
      \hline
      $\mathbb{R}_{0}^{+}$ & The set of positive real numbers including zero\\
      \hline
    \end{tabular}
  \end{center}
\end{table}

\subsection{The MIA Diffusion Model}
This is the diffusion model introduced by \cite{wang2012scalable}. Before stating the diffusion model first we state two preliminary definitions.
\begin{mydef}[Propagation Probability of a Path]
Given two vertices $u_i, u_j \in V(G)$, let $\mathbb{P}(u_i,u_j)$  denotes the set of paths from the vertex $u_i$ to $u_j$. For any arbitrary path $p \in \mathbb{P}(u_i,u_j)$ the propagation probability  is defined as the product of the influence probabilities of the edges that constitute the path.
\begin{equation}
  \mathcal{P}(p) =
  \begin{cases}
    \underset{(u_iu_j) \in p}{\prod} P_{u_iu_j} & \text{if $\mathbb{P}(u_i,u_j) \neq \phi$} \\
    0 & \text{otherwise}\\
  \end{cases}
\end{equation}
\end{mydef}

\begin{mydef}[Maximum Probabilistic Path]
Given two vertices $u_i,u_j \in V(\mathcal{G})$, the maximum probabilistic path is the path with the maximum propagation probability and denoted as $p_{u_i,u_j}^{max}$. Hence,
\begin{equation}
p_{u_i,u_j}^{max}=\underset{p \in \mathbb{P}(u_i,u_j)}{argmax} \ \mathcal{P}(p)
\end{equation}
\end{mydef}
Based on the path propagation probability and and maximum probabilistic path maximum influence in-arborescences are defined as follows.
\begin{mydef}[Maximum Influence In-Arborescences (MIIA)]
For a node $v \in V(G)$, and a probability threshold $\theta$, the maximum influence in-arborescence is the union of the maximum influence paths with more than the cut off probability $\theta$ to $v$. Mathematically,
\begin{equation}
MIIA(v, \theta)=\underset{u \in V(G) \setminus v; p_{u,v}^{max} \geq \theta}{\bigcup} p_{u,v}^{max}
\end{equation}
\end{mydef} 

Now, given a seed set $S$, a node $v \in V(G) \setminus S$ and its $MIIA(v,\theta)$, by the rule of $MIA$ model it is assumed that the influence from $S$ to $v$ is propagated through the paths in $MIIA(v,\theta)$. Let $ap(u, S, MIIA(v, \theta))$ be the influence probability of the node $u$ by the seed set $S$ and this can be recursively computed as mentioned in \cite{wang2012scalable}.

\begin{mydef}[MIA Diffusion Model]
For any seed set $S \subseteq V(G)$ and any arbitrary node $v \in V(G) \setminus S$, in the MIA Model, it is assumed that the nodes in $S$ will influence $v$ through the paths in $MIIA(v, \theta)$ and the expected influence by the seed set $S$ can be given by the Equation \ref{Eq:Inf}
\begin{equation} \label{Eq:Inf}
\sigma(S)= \underset{v \in V(G)}{\sum} ap(v,S, MIIA(v,\theta))
\end{equation}
\end{mydef}
The following two lemmas will be useful to prove some properties of the diffusion game, which will be defined later.
\begin{mylemma} \cite{wang2012scalable} \label{Lemma:1}
The influence function mentioned in Equation \ref{Eq:Inf} is non\mbox{-}negative, and monotone (i.e., $\forall S \subseteq V(G)$, $\sigma(S) \geq 0$ and $\forall S \subseteq T \subseteq V(G)$, $\sigma(S) \leq \sigma(T)$). 
\end{mylemma}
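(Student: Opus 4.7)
The plan is to handle the two claims separately, with non\mbox{-}negativity being essentially immediate and monotonicity requiring an induction on the structure of the in-arborescence $MIIA(v,\theta)$.

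For non\mbox{-}negativity, I would simply observe that $ap(u,S,MIIA(v,\theta))$ is defined as an activation probability and therefore lies in $[0,1]$. Since $\sigma(S)$ is a finite sum of such probabilities over $v\in V(G)$ (Equation \ref{Eq:Inf}), it is a sum of non\mbox{-}negative reals and hence $\sigma(S)\geq 0$ for every $S\subseteq V(G)$.

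For monotonicity, fix $S\subseteq T\subseteq V(G)$ and fix a target node $v\in V(G)$. Since $MIIA(v,\theta)$ is an in-arborescence rooted at $v$, I would do induction on the depth of nodes in this tree (measured from the leaves toward the root $v$). The recursive definition of the activation probability reads
\begin{equation*}
ap(u,S,MIIA(v,\theta))=1-\prod_{w\in N^{in}(u)}\bigl(1-ap(w,S,MIIA(v,\theta))\cdot P_{wu}\bigr),
\end{equation*}
with the boundary condition $ap(u,S,MIIA(v,\theta))=1$ whenever $u\in S$. The key observation is that the right-hand side, viewed as a function of the in-neighbor values $ap(w,\cdot)\in[0,1]$, is coordinatewise non\mbox{-}decreasing, because each factor $1-ap(w,\cdot)\cdot P_{wu}$ is non\mbox{-}increasing in $ap(w,\cdot)$ and the outer $1-\prod(\cdot)$ flips the direction back. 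The base case is a leaf $u$ of $MIIA(v,\theta)$: if $u\in S$ then also $u\in T$, so both activation probabilities equal $1$; if $u\notin T$ then both equal $0$ (empty product gives $1$, so $1-1=0$); if $u\in T\setminus S$ then the value jumps from $0$ to $1$, still consistent with the inequality. Propagating upward using the monotonicity of the recursive expression yields $ap(u,S,MIIA(v,\theta))\leq ap(u,T,MIIA(v,\theta))$ for every node $u$ in the tree, and in particular at the root. Summing over $v\in V(G)$ gives $\sigma(S)\leq \sigma(T)$.

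The main obstacle I anticipate is making the inductive step precise: I need to verify that the multivariate map $(x_1,\dots,x_k)\mapsto 1-\prod_{i=1}^{k}(1-x_i p_i)$ is non\mbox{-}decreasing in each $x_i\in[0,1]$ whenever $p_i\in[0,1]$, and that the arborescence structure of $MIIA(v,\theta)$ guarantees the recursion is well\mbox{-}founded (i.e., terminates at seed nodes or at leaves with no in-neighbors). Once these two facts are stated cleanly, the induction goes through without further computation.
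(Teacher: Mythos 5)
The paper does not prove this lemma at all; it is stated as an imported result with the citation to \cite{wang2012scalable}, and your argument is exactly the standard proof given in that reference (non\mbox{-}negativity because $\sigma(S)$ is a finite sum of activation probabilities in $[0,1]$; monotonicity by induction from the leaves of $MIIA(v,\theta)$ to the root using the coordinatewise monotonicity of $(x_1,\dots,x_k)\mapsto 1-\prod_i(1-x_ip_i)$ on $[0,1]^k$). Your sketch is correct; the only detail worth adding when writing it out is the inductive case of an \emph{internal} node $u\in T\setminus S$, where $ap(u,T,MIIA(v,\theta))=1$ dominates trivially rather than via the recursion.
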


\begin{mylemma} \cite{wang2012scalable} \label{Lemma:2}
The influence function mentioned in Equation \ref{Eq:Inf} is submodular (i.e., $\forall S \subseteq T \subseteq V(G)$, and $u \in V(G) \setminus T$ and $ \sigma(S \cup \{u\})- \sigma(S) \geq \sigma(T \cup \{u\})- \sigma(S)$. 
\end{mylemma}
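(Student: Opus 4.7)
The plan is to reduce submodularity of $\sigma$ to submodularity of the node-wise activation probability, and then prove the latter via a live-edge (possible-world) argument restricted to the in-arborescence. Since $\sigma(S)=\sum_{v\in V(G)} ap(v,S,MIIA(v,\theta))$ by Equation~\ref{Eq:Inf}, and the class of submodular functions is closed under nonnegative linear combinations, it suffices to show that for every fixed $v\in V(G)$ the set function $f_v\colon S\mapsto ap(v,S,MIIA(v,\theta))$ is submodular on $2^{V(G)}$.

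For the node-wise claim I would use the standard possible-world interpretation of the MIA process on the tree $MIIA(v,\theta)$. Independently declare each arc $(w,u)$ of the arborescence \emph{live} with probability $P_{wu}$ and \emph{blocked} otherwise, obtaining a random subgraph $X$, and let $R_v(X)\subseteq V(G)$ be the set of nodes from which $v$ is reachable via live arcs in $X$. Because $MIIA(v,\theta)$ is an in-arborescence (so the in-branches feeding any node are vertex-disjoint), the recursion defining $ap$ in \cite{wang2012scalable} collapses exactly to
$$f_v(S) \;=\; \mathbb{E}_{X}\!\left[\mathbf{1}\{S\cap R_v(X)\neq\emptyset\}\right].$$
This identity is the technical heart of the argument; verifying it by induction on the depth of the arborescence is where the tree structure is used in a non-trivial way.

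Once the identity is in hand, submodularity of $f_v$ follows immediately. For any fixed realization $X$, the function $S\mapsto \mathbf{1}\{S\cap R_v(X)\neq\emptyset\}$ is a rank-one coverage function and is therefore submodular: if $S\subseteq T$ and $u\notin T$, then adding $u$ to $T$ changes the indicator from $0$ to $1$ only when $u\in R_v(X)$ and $T\cap R_v(X)=\emptyset$, which forces $S\cap R_v(X)=\emptyset$ as well, so the marginal on $S$ is at least the marginal on $T$. Expectation over $X$ preserves the pointwise inequality, hence $f_v$ is submodular. Summing over $v$ yields submodularity of $\sigma$.

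The main obstacle I expect is not the coverage step, which is textbook, but the careful justification of the live-edge identity on $MIIA(v,\theta)$: one must argue that truncating the propagation to the maximum-influence in-arborescence is consistent with independently flipping arcs of that arborescence, i.e.\ that the in-branches at each internal node contribute independently so that the product form $1-\prod_{w\in N^{in}(u)}(1-ap(w,\cdot)\cdot P_{wu})$ is exact. Lemma~\ref{Lemma:1} (monotonicity) of \cite{wang2012scalable} and the acyclicity of the arborescence are precisely what make this induction go through.
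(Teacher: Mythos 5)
Your argument is correct, but be aware that the paper does not prove this lemma at all: it is imported by citation from \cite{wang2012scalable}, so what you have written is a self-contained reconstruction rather than a parallel to anything in the text. Your route --- decompose $\sigma$ from Equation \ref{Eq:Inf} as the nonnegative sum of the per-node functions $f_v(S)=ap(v,S,MIIA(v,\theta))$, establish the possible-world identity $f_v(S)=\mathbb{E}_X\left[\mathbf{1}\{S\cap R_v(X)\neq\emptyset\}\right]$ by induction on the arborescence, and finish with the coverage-plus-expectation argument --- is the Kempe--Kleinberg--Tardos live-edge technique specialized to a tree. The cited source instead works directly with the recursion $ap(u)=1-\prod_{w}\bigl(1-ap(w)\,P_{wu}\bigr)$ and proves monotonicity and submodularity of $ap(v,\cdot\,,MIIA(v,\theta))$ by induction on the tree, tracking marginal gains through the recursive formula. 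Your version buys a one-line finish (a rank-one coverage function is submodular, and submodularity is closed under expectation and nonnegative sums) at the price of justifying the live-edge identity; you correctly locate the crux there, and the identity does hold: because $MIIA(v,\theta)$ is an in-arborescence, the events ``$w$ is reached from $S$'' for distinct in-neighbors $w$ of a node depend on disjoint arc sets and are therefore independent, which is exactly what makes the product form exact. Two small corrections: $R_v(X)$ must be read as reachability within the arborescence only, so seeds outside $MIIA(v,\theta)$ contribute nothing to $f_v$ (consistent with both sides of your identity); and Lemma \ref{Lemma:1} plays no role in the induction --- what makes it go through is the independence across vertex-disjoint subtrees, not monotonicity.
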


\subsection{Co\mbox{-}Operative Game Theory Specific}
Co-operative game theory is the study about strategic formation of coalition and their mathematical analysis which is defined next. 
\begin{mydef}[Co\mbox{-}Operative Game or Coalition Game or Transferable Utility Game]
A Co\mbox{-}Operative Game is defined by the tuple $(\mathcal{N}, \nu)$, where $\mathcal{N}=\{1, 2, \dots, n\}$ is the finite set of players and $\nu$ is called the payoff function (also known as the characteristic function) that assigns each possible coalition to a real number, i.e., $\nu: 2^\mathcal{N} \rightarrow \mathbb{R}^{+}$. It is assumed that $\nu(\emptyset)=0$.
\end{mydef}
In co\mbox{-}operative game theory, one of the main concern is how to distribute the total utility among the players. It is called as the \emph{payoff allocation}.
\begin{mydef}[Payoff Allocation]
A payoff allocation is a vector $v=(v_i)_{i \in \mathcal{N}}$ in $\mathbb{R}^{\mathcal{N}}$, where each entry represents the utility share to the corresponding player.
\end{mydef}
 The \emph{Shapley Value} is an important solution concept in co-operative game theory, which performs the payoff allocation satisfying the following three properties, namely, symmetry, linearity, and carrier \cite{narahari2014game}.
\begin{mydef}[Shapley Value]
In a co-operative game $(\mathcal{N}, \nu)$, the Shapley Value of the player $i$ towards a coalition $\mathcal{T}$ is defined as 
\begin{equation}
\Psi_{i}(\mathcal{N}, \nu)=\underset{\mathcal{T} \subseteq \mathcal{N} \setminus \{i\}}{\sum} \frac{|\mathcal{T}|! * (|\mathcal{N}| - |\mathcal{T}|-1)!}{|\mathcal{N}|!} (\nu(\mathcal{T} \cup \{i\})- \nu(\mathcal{T}))
\end{equation}
\end{mydef}

\section{Proposed Methodology} \label{Sec:Methodology}
In this section, we describe the game theoretic approach for solving the BIM Problem. This section is broadly divided into the following  subsections. In Sub-section \ref{SubSec:BIM_Game}, we define the BIM Game and establish its properties. Sub-section \ref{SubSec:Overview} contains the overview of the proposed methodology. Sub-section \ref{SubSec:Algo} contains the algorithms present in the proposed methodology, and finally, Sub-section \ref{SubSec:Complexity} contains time and space complexity analysis of our proposed approaches.
\subsection{The BIM Game and Its Properties} \label{SubSec:BIM_Game}
\begin{mydef}[The BIM Game] \label{Def:Game}
We define our diffusion game as a co\mbox{-}operative game, where the nodes of the network are the players, and for any subset of players, their utility is the expected influence in the network under the MIA Model of diffusion. We denote this diffusion game as $(\mathcal{N}, \nu)$, where $\mathcal{N}=V(G)=\{u_1, u_2, \ldots, u_n\}$,  $\nu: 2^{V(G)} \longrightarrow \mathbb{R}^{+}$, and for any $S \subseteq V(G)$, $\nu(S)=\sigma(S)$ \footnote{As mentioned previously, in this study we assume that the diffusion is happening by the rule of MIA Model. Hence, now onwards we omit the subscript $\mathcal{M}$ from $\sigma(.)$}. 
\end{mydef}
 Now, we show some of the important properties of the utility function and the proposed diffusion game.
\begin{myprop}[Non\mbox{-}negativity and Monotonicity of $\nu(.)$]
The utility function of the diffusion game in Definition \ref{Def:Game} is non\mbox{-}negative and monotone.
\end{myprop}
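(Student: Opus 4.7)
The plan is to observe that this proposition follows almost immediately from the identification of the utility function with the MIA influence function, and then invoke Lemma \ref{Lemma:1} verbatim. Specifically, by Definition \ref{Def:Game}, for every coalition $S \subseteq V(G)$ we have $\nu(S) = \sigma(S)$, and $\nu(\emptyset) = \sigma(\emptyset) = 0$ is already part of the setup.

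First I would unfold the definition of $\nu$ to reduce both claims to statements about $\sigma(\cdot)$. For non-negativity, I would simply note that $\sigma(S) = \sum_{v \in V(G)} ap(v, S, MIIA(v,\theta))$ is a sum of activation probabilities, each lying in $[0,1]$, so the sum is non-negative; alternatively, this is the first half of Lemma \ref{Lemma:1}. For monotonicity, I would take arbitrary $S \subseteq T \subseteq V(G)$ and cite the second half of Lemma \ref{Lemma:1} to conclude $\nu(S) = \sigma(S) \leq \sigma(T) = \nu(T)$.

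There is no real obstacle here: the work has already been done in the MIA paper of Wang et al., and the proposition is essentially a translation of a known property of $\sigma$ into the language of the cooperative game $(\mathcal{N}, \nu)$. The only thing worth flagging explicitly in the write-up is the identification $\mathcal{N} = V(G)$ and $\nu = \sigma$, so the reader sees why Lemma \ref{Lemma:1} applies verbatim. The full proof will therefore be two or three lines.
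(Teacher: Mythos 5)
Your proposal is correct and matches the paper's own proof, which likewise observes that $\nu(S)=\sigma(S)$ by Definition \ref{Def:Game} and then invokes Lemma \ref{Lemma:1} for non-negativity and monotonicity of $\sigma$. Your additional remark that non-negativity can also be read off directly from $\sigma(S)=\sum_{v \in V(G)} ap(v,S,MIIA(v,\theta))$ being a sum of probabilities is a harmless elaboration, not a different route.
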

\begin{proof}
As mentioned in Lemma \ref{Lemma:1}, under MIA diffusion model, the influence function is non\mbox{-}negativity and monotone, the same holds for the utility function as well.
\end{proof}
\begin{myprop}[Non\mbox{-}Convexity of the Diffusion Game]
The BIM Game defined in Definition \ref{Def:Game} is not convex.
\end{myprop}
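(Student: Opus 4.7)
The plan is to prove non-convexity by exhibiting a concrete counterexample, since the submodularity established in Lemma \ref{Lemma:2} already points in the direction opposite to convexity. Recall that a coalition game $(\mathcal{N}, \nu)$ is \emph{convex} iff $\nu$ is supermodular, that is $\nu(S \cup T) + \nu(S \cap T) \geq \nu(S) + \nu(T)$ for all $S, T \subseteq \mathcal{N}$, or equivalently iff marginal contributions are non\mbox{-}decreasing in the coalition. Lemma \ref{Lemma:2} states that $\sigma$ is submodular, so marginal contributions are non\mbox{-}increasing. The two properties coexist only in the modular (additive) case, so it suffices to produce a single instance of the BIM Game where submodularity is strict.

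First, I would restate the characterization of convexity in the supermodular form, so that the target inequality to be violated is written down explicitly as
\begin{equation*}
\nu(S \cup T) + \nu(S \cap T) \geq \nu(S) + \nu(T).
\end{equation*}
Then I would pick the smallest non\mbox{-}trivial network that forces the MIIAs of two distinct seeds to overlap, because overlap is precisely what creates strict submodularity. A convenient choice is a three\mbox{-}node graph $G$ with vertices $\{u_1, u_2, u_3\}$ and directed edges $(u_1, u_3)$, $(u_2, u_3)$, each with influence probability close to $1$, and a small cutoff $\theta$. For this instance the influence values are immediate from the MIA definition:
\begin{equation*}
\sigma(\{u_1\}) \approx 2, \quad \sigma(\{u_2\}) \approx 2, \quad \sigma(\{u_1, u_2\}) \approx 3, \quad \sigma(\emptyset) = 0.
\end{equation*}

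Next, I would take $S = \{u_1\}$, $T = \{u_2\}$ so that $S \cap T = \emptyset$ and $S \cup T = \{u_1, u_2\}$, and verify that
\begin{equation*}
\nu(S \cup T) + \nu(S \cap T) \approx 3 + 0 < 2 + 2 \approx \nu(S) + \nu(T),
\end{equation*}
which directly violates the supermodularity condition. Substituting this into the $(\mathcal{N}, \nu)$ constructed from this $G$ yields a BIM Game that is not convex, and hence the class of BIM Games is not convex in general.

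The main obstacle, and really the only subtle point, is justifying the influence values of the counterexample cleanly within the MIA framework, since $ap(\cdot)$ is defined recursively and depends on $\theta$. I would handle this by choosing edge probabilities equal to $1$ (or arbitrarily close to $1$) and $\theta$ small enough that both maximum influence paths to $u_3$ are retained in $MIIA(u_3, \theta)$; this way $ap(u_3, \{u_1\}, MIIA(u_3, \theta)) = ap(u_3, \{u_2\}, MIIA(u_3, \theta)) = 1$ and $ap(u_3, \{u_1, u_2\}, MIIA(u_3, \theta)) = 1$ as well, so the numerical gap $3 < 4$ is exact rather than approximate. Everything else is a one\mbox{-}line substitution, so the argument is short once the example is fixed.
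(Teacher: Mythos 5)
Your proof is correct, but it takes a more careful route than the paper does. The paper's own argument is one line: convexity of a coalition game means supermodularity of $\nu$ (non\mbox{-}decreasing marginal contributions), Lemma \ref{Lemma:2} says $\sigma$ is submodular, hence the game ``can not be convex.'' As you correctly observe, that implication is not valid as stated: a modular (additive) set function is simultaneously submodular and supermodular, so submodularity alone does not exclude convexity --- one needs an instance where the submodularity inequality is \emph{strict}. Your three\mbox{-}node example with edges $(u_1,u_3)$ and $(u_2,u_3)$ supplies exactly that: taking $S=\{u_1\}$, $T=\{u_2\}$ and edge probabilities equal to $1$ with $\theta$ small gives $\nu(S\cup T)+\nu(S\cap T)=3+0<2+2=\nu(S)+\nu(T)$, and the $ap(\cdot)$ values are immediate since both maximum influence paths into $u_3$ survive the cutoff. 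So your argument actually closes a gap in the paper's proof rather than merely paralleling it. Two minor remarks: first, your supermodularity formulation of convexity is equivalent to the paper's marginal\mbox{-}contribution formulation, so no issue there; second, your closing phrase that ``the class of BIM Games is not convex in general'' is the honest reading of the proposition --- for a degenerate network (e.g., one with no edges) $\sigma$ is modular and the game \emph{is} convex, so the statement can only be true as an existence claim over instances, which is precisely what your counterexample establishes.
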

\begin{proof}
By definition, a co\mbox{-}operative game is said to be convex, if its utility function $\nu(.)$ has the following property: for all $S \subseteq T \subseteq \mathcal{N}$ and for all $u \in \mathcal{N} \setminus T$, $\nu(S \cup \{u\})- \nu(S) \leq \nu(T \cup \{u\})- \nu(T)$. As mentioned in Lemma \ref{Lemma:2}, the influence function $\sigma(.)$ is submodular, this implies that the diffusion game can not be convex.
\end{proof}
\begin{myprop}[Sub\mbox{-}Additivity of the BIM Game]
The BIM Game mentioned in Definition \ref{Def:Game} is sub\mbox{-}additive.
\end{myprop}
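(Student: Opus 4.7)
The plan is to derive sub-additivity of $\nu$ directly from the submodularity and non-negativity of the MIA influence function that were recorded in Lemmas \ref{Lemma:1} and \ref{Lemma:2}, together with the boundary condition $\sigma(\emptyset)=0$. Recall that sub-additivity of a coalitional game asks that $\nu(S\cup T)\leq \nu(S)+\nu(T)$ for every pair of disjoint coalitions $S,T\subseteq \mathcal{N}$; since $\nu(\cdot)=\sigma(\cdot)$ by Definition \ref{Def:Game}, it suffices to establish this inequality for the influence function.

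First I would fix disjoint subsets $S,T\subseteq V(G)$, enumerate $S=\{s_1,\ldots,s_k\}$ in an arbitrary order, and telescope the quantity $\sigma(T\cup S)-\sigma(T)$ as
\[
\sigma(T\cup S)-\sigma(T)=\sum_{i=1}^{k}\Bigl[\sigma\bigl(T\cup\{s_1,\ldots,s_i\}\bigr)-\sigma\bigl(T\cup\{s_1,\ldots,s_{i-1}\}\bigr)\Bigr].
\]
Because $S$ and $T$ are disjoint, for each $i$ we have $\{s_1,\ldots,s_{i-1}\}\subseteq T\cup\{s_1,\ldots,s_{i-1}\}$ and $s_i$ lies outside the larger set, so Lemma \ref{Lemma:2} applies term by term and bounds each marginal gain on the right-hand side by the corresponding gain on the smaller base set $\{s_1,\ldots,s_{i-1}\}$. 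Telescoping the resulting inequality then collapses the sum to $\sigma(S)-\sigma(\emptyset)=\sigma(S)$, yielding $\sigma(T\cup S)-\sigma(T)\leq \sigma(S)$, i.e., $\sigma(S\cup T)\leq \sigma(S)+\sigma(T)$, which is exactly sub-additivity of $\nu$.

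There is no real technical obstacle here; the only care needed is in invoking submodularity with the right pair of nested sets at each step of the telescoping, and in noting that disjointness of $S$ and $T$ is precisely what guarantees $s_i\notin T\cup\{s_1,\ldots,s_{i-1}\}$, so that Lemma \ref{Lemma:2} is applicable. Non-negativity from Lemma \ref{Lemma:1} together with $\sigma(\emptyset)=0$ is used implicitly to identify the final telescoped term with $\sigma(S)$. I would conclude with one sentence remarking that this shows the grand coalition can never outperform the sum of separate coalitions in expected influence, which contrasts nicely with the already-established non-convexity of the game.
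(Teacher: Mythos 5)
Your proof is correct, but it takes a genuinely different route from the paper. You derive sub-additivity from submodularity (Lemma \ref{Lemma:2}) by ordering $S=\{s_1,\ldots,s_k\}$, telescoping $\sigma(T\cup S)-\sigma(T)$ into marginal gains, bounding each gain on the base set $T\cup\{s_1,\ldots,s_{i-1}\}$ by the gain on the smaller base set $\{s_1,\ldots,s_{i-1}\}$, and collapsing the right-hand side to $\sigma(S)-\sigma(\emptyset)=\sigma(S)$; the bookkeeping with disjointness that guarantees $s_i\notin T\cup\{s_1,\ldots,s_{i-1}\}$ is handled properly. The paper instead writes $\nu(\mathcal{U}_1\cup\mathcal{U}_2)=\sum_{u\in\mathcal{U}_1\cup\mathcal{U}_2}\nu(u)$ and applies inclusion--exclusion on these sums of singleton values. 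That step implicitly assumes the coalition value is additive over its members, which does not actually hold for the MIA influence function (the expected influence of a set is not the sum of the expected influences of its members), so your argument is the more rigorous of the two: it uses only properties the paper has already established (submodularity and $\sigma(\emptyset)=0$). Two small remarks: the paper states the property for arbitrary, possibly overlapping coalitions $\mathcal{U}_1,\mathcal{U}_2$, and your disjoint-case argument extends to that setting with one extra line via monotonicity (Lemma \ref{Lemma:1}), namely $\sigma(\mathcal{U}_1\cup\mathcal{U}_2)=\sigma(\mathcal{U}_1\cup(\mathcal{U}_2\setminus\mathcal{U}_1))\leq\sigma(\mathcal{U}_1)+\sigma(\mathcal{U}_2\setminus\mathcal{U}_1)\leq\sigma(\mathcal{U}_1)+\sigma(\mathcal{U}_2)$; and your closing remark should be softened, since sub-additivity says the grand coalition cannot exceed the sum of the parts in value, which is a statement about the utility function rather than a performance comparison.
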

\begin{proof}
Let, $\mathcal{U}_{1}$ and $\mathcal{U}_{2}$ be two coalition of the BIM Game $(N, \nu)$. Now, we need to show that the utility of the larger coalition, i.e., $\mathcal{U}_{1} \cup \mathcal{U}_{2}$ is at most as the total utility of the individual coalition; i.e.; $\nu(\mathcal{U}_{1} \cup \mathcal{U}_{2}) \leq \nu(\mathcal{U}_{1})+ \nu(\mathcal{U}_{2})$.
\begin{center}
$\nu(\mathcal{U}_{1} \cup \mathcal{U}_{2})=\underset{u \in \mathcal{U}_{1} \cup \mathcal{U}_{2}}{\sum} \nu (u)$\\
\vspace{0.2 cm}
$\Rightarrow \nu(\mathcal{U}_{1} \cup \mathcal{U}_{2})=\underset{u_{i} \in \mathcal{U}_{1}}{\sum} \nu (u_{i}) + \underset{u_{j} \in \mathcal{U}_{2}}{\sum} \nu (u_{j})- \underset{u_{x} \in \mathcal{U}_{1} \cap  \mathcal{U}_{2}}{\sum} \nu (u_{x})$\\
\vspace{0.2 cm}
$\Rightarrow \nu(\mathcal{U}_{1} \cup \mathcal{U}_{2})= \nu(\mathcal{U}_{1}) + \nu(\mathcal{U}_{2}) - \underset{u_{x} \in \mathcal{U}_{1} \cap  \mathcal{U}_{2}}{\sum} \nu (u_{x})$\\
\vspace{0.2 cm}
$\Rightarrow \nu(\mathcal{U}_{1} \cup \mathcal{U}_{2}) \leq \nu(\mathcal{U}_{1}) + \nu(\mathcal{U}_{2})$\\
\end{center}
Hence, it is proved that the BIM Game is sub\mbox{-}additive.
\end{proof}
\subsection{Overview of the Proposed Methodologies} \label{SubSec:Overview}
Here, we describe the shapley value\mbox{-}based iterative approach for identifying the influential users from a social network for the BIM Problem. The proposed methodology is broadly divided into two steps: (i) Shapley Value Computation, and (ii) Seed Set Selection. Now, we explain both these steps in detail.
\paragraph{\underline{\textbf{Step 1 (Shapley Value Computation):}}}
In this step, the Shapley value for all the nodes of the network are computed. As mentioned previously, the Shapley value of a node is basically, the average of the marginal contributions over all possible grand coalition. Now, for forming the grand coalition players may arrive at any order. Certainly, there are $n!$ possible ways to form the grand coalition. Hence, by starling approximation, the growth of $n!$ is basically of $\mathcal{O}(n^{n})$; i.e.; exponential \cite{cormen2009introduction}. Hence, even for a small size network (let's say, comprising of $100$ nodes, though real\mbox{-}world networks are much much larger) exact computation of shapley value is not feasible. Hence, to get rid of this problem, we use a result from an existing study by \cite{maleki2013bounding}. Their study shows that if the range of the marginal contribution of the players are known, it is possible to compute the shapley value of the players with bounded error and high probability. Particularly, if the number of permutations (denoted by $\tau$) considered is greater than or equal to $ \ceil*{\frac{ln(\frac{2}{\delta}). r^{2}}{2. \epsilon^{2}}}$, then the probability that the incurred error in shapley value computation for any of the players is bounded by $\epsilon$ with probability $1-\delta$. By this principle, for a given $\epsilon$, $\delta$, and $r$ value, we can easily calculate the number of permutations required to consider for shapley value computation such that all the conditions are met. Algorithm \ref{Algo:Step1} describes this procedure.
\paragraph{\underline{\textbf{Step 2 (Seed Set Selection):}}} Based on the computed shapley value, we select the seed nodes using the following ways:
\begin{itemize}
\item \textbf{Method 1:} Once the shapley value of the players are computed, we sort them based on these values and iteratively pick up users as seed nodes until the budget is exhausted. However, in each iteration once a node has been picked up as a seed node, if un\mbox{-}flag its neighbors so that these can not be selected as seed. This will help the proposed methodology to uniformly spread the seed nodes across the network. Algorithm \ref{Algo:Step21} describes this procedure.

\item \textbf{Method 2:} In this method, we detect the inherent community structure of the network. For this purpose the \emph{Louvian Algorithm} \cite{blondel2008fast} has been used in our study. After that the total allocated budget has been divided among the communities proportional to the shapley value of the nodes. Based on this shared budget, from each of the communities high shapley value nodes are chosen  as seed nodes until the budget is exhausted. If there are any extra budget during seed set from the communities, then it is transferred to the largest community. Algorithm \ref{Algo:Step22} describes this procedure.
\end{itemize}
Next, we describe both the methodologies in detail in the next subsection.
\subsection{Algorithms in the Proposed Methodology} \label{SubSec:Algo}
Now, we describe the proposed approaches in the form of algorithms.
Algorithm \ref{Algo:Step1} reports the Step 1 of our proposed methodology. The working principle of this algorithm is as follows: Given $\epsilon$, $\delta$, and $r$, first we generate the number of permutations required for shapley value computation. Next, for each permutation we compute the marginal gain of the nodes. In Algorithm \ref{Algo:Step1}, $S_{u}(p_i)$ denotes the set of nodes that appeared before $u$ in the permutation $p_i$. First, we activate the first node in the permutation and compute the influence. Next, we consider the second node in the permutation. If it is  already activated by the first node then its marginal contribution is $0$. If it is not activated then we compute the marginal gain by subtracting the individual influence from the combined influence. In this way, we compute the marginal gain of all the nodes of the network. We repeat this process for $\mathcal{R}$ (part of input) given number of times. Now, we compute the marginal gain by dividing $\mathcal{R}$. Finally, the shapley value is computed by dividing the number of permutations considered in shapley value computation.
\par It is important to observe that Algorithm \ref{Algo:Step1} takes the range of marginal gain of the players as one of the input among many. 
Now, we mention the way we compute this quantity. For any node  $u \in V(G)$, we compute its range as follows: Consider the neighbors of the node, i.e.,  $N(u)$. Consider a node $w \in N(u)$. Now, $w$ can be influenced by any of its neighbors. The upper value of this range is $(\frac{deg(u)}{2}+\underset{w \in N(u)}{\sum} \frac{1}{deg(w)-1})$. In the worst case may be none of the neighbors will be influenced, and hence, the lower value of the range is $0$. So, the range of the marginal gain of the player $u$ is $[0,\frac{deg(u)}{2}+\underset{w \in N(u)}{\sum} \frac{1}{deg(w)-1}]$. After computing the range of all the players, we aggregate them by taking average over all the players; i.e., $r=\frac{\underset{u \in V(G)}{\sum c_{u}}}{n}$.
\par As mentioned, after computing the shapley value of the players the next step is to choose the seed nodes. Algorithm \ref{Algo:Step21} describes the Method 1. Though Algorithm \ref{Algo:Step21} easy to understand and simple to implement, still we can improve the performance of this algorithm by exploiting the community structure of the network. Algorithm \ref{Algo:Step22} describes that procedure.
\pagebreak

  \begin{algorithm}[H]
	\SetAlgoLined
	\KwData{A BIM Problem Instance $<G(V, E, P),C,B>$, Error $\epsilon$, Probability $\delta$, and Range of Marginal Contribution $r$.}
	\KwResult{The array $\phi$ containing the shapley value of the nodes.}
	\For{$u \in V(G)$}{
	$b_{u} \longleftarrow 0$\;
	$c_{u} \longleftarrow \frac{deg(u)}{2}+\underset{w \in N(u)}{\sum} \frac{1}{deg(w)-1}$\;
	}
	$r= \frac{\underset{u \in V(G)}{\sum c_{u}}}{n}$\;
	Generate $\tau \geq \ceil*{\frac{ln(\frac{2}{\delta}). r^{2}}{2. \epsilon^{2}}}$ number of permutations of the players\;
	Computation repeats for $\mathcal{R}$ times \;
	Create the arrays $\phi$ and $MG$, each of size $n$ and initialize them with $0$\;
	\For{ $i=1 \text{ to } \tau$}{
	\For{ $u \in V(G)$}{
	$T(u)=0$\;
	}
	\For{ $j=1 \text{ to } \mathcal{R}$}{
	\For{ $u \in V(G)$}{
	$T(u)=T(u)+ \nu(S_{u}(p_i) \cup \{u\}) - \nu(S_{u}(p_i))$\;
	}
	$MG(u)= \frac{T(u)}{\mathcal{R}}$\;
	}
	}
	\For{ $u \in V(G)$}{
	$\phi(u)= \frac{MG(u)}{\tau}$\;
	}
	Return the array $\phi$ containing the shapley value of the nodes
	\caption{Step 1: Algorithm for Computing Approximate Shapley Value Computation}
	\label{Algo:Step1}
\end{algorithm}

 \begin{algorithm}[h]
	\SetAlgoLined
	\KwData{A BIM Problem Instance $<G(V, E, P),C,B>$, The array $\phi$.}
	\KwResult{The seed set for diffusion $S$, such that $C(S) \leq B$.}
	$N \longleftarrow$ Sort the players based on the shapley value computed in $\phi$. \;
	$S= \emptyset$\;
	Declare the array \texttt{Flag} of size $n$ and initialize each entry by $1$ \;
	\For{$i=1 \text{ to }n$}{
	\While{$B>0$}{
	 \If{$B \geq C(u_{i}) \text{ and } Flag(u_{i})==1$}{
	 $S \longleftarrow S \cup \{u_i\}$\;
	 $B \longleftarrow B - C(u_{i})$\;
	 }
	 \For{$\text{All } v \in \mathcal{N}(u_i)$}{
	 $Flag(v) \longleftarrow 0$\;
	 }
	}
	}
	Return the set $S$ as seed set.
	\caption{Step 2: Influential Node Selection (Method $1$)}
	\label{Algo:Step21}
\end{algorithm}

\begin{algorithm}[H]
	\SetAlgoLined
	\KwData{A BIM Problem Instance $<G(V, E, P),C,B>$, The array $\phi$.}
	\KwResult{The seed set for diffusion $S$, such that $C(S) \leq B$.}
	$Community \longleftarrow \text{Detect the Community Structure of the Network} $\;
	$K \longleftarrow \{K_1, K_2, \ldots, K_\ell\}$;
	$K \longleftarrow Sort(K)$\;
	$K_{max} \longleftarrow Largest\_Community(K)$;
	$Total\_Shapley \longleftarrow 0$\;
	\For{$\text{All } u \in N$}{
	$Total\_Shapley= Total\_Shapley+\phi (u)$\;
	}
	Create the array $B_{K}$ and initialize all entries with $0$\;
	\For{$i=1 \text{ to } \ell$}{
	$Temp\_Shapley \longleftarrow 0$\;
	\For{$\text{All } u \in V(K_{i})$}{
	$Temp\_Shapley = Temp\_Shapley + \phi(u)$\;
	}
	$B_{K_{i}} \longleftarrow \frac{Temp\_Shapley}{Total\_Shapley} .B$
	}
	$S \longleftarrow \emptyset$\;
	\For{$i=1 \text{ to } \ell$}{
	$V(K_{i}) \longleftarrow$ Sort the nodes in $V(K_{i})$ based on the value computed in $\phi$\;
	\For{$j=1 \text{ to } |V(K_{i})|$}{
	\While{$B_{K_{i}} \geq 0$}{
	\If{$B_{K_{i}} \geq C(u_j)$}{
	$S \longleftarrow S \cup \{u_j\}$\;
	$B_{K_{i}} \longleftarrow B_{K_{i}} - C(u_j)$\;
	}
	}
	}
	$B_{K_{max}} \longleftarrow B_{K_{max}}+ B_{K_{i}}$\;
	}
	Return the set $S$ as seed set.
	\caption{Step 2: Influential Node Selection (Method $2$)}
	\label{Algo:Step22}
\end{algorithm}

\subsection{Complexity Analysis of the Proposed Methodologies} \label{SubSec:Complexity}
Now, we analyze the algorithms to understand the running time and space requirement for the proposed methodologies. From Line $1$ to $5$, we compute the range of marginal gain of the players. This can be implemented in $\mathcal{O}(n^{2})$ time and $\mathcal{O}(n)$ space. Degrees of all the nodes can be computed in $\mathcal{O}(n^{2})$ time. Now, assume that $d_{max}$ denotes the maximum degree of all the nodes; i.e.; $d_{max}=\underset{u \in V(G)}{max} \ deg(u)$. Hence, for a particular player $u \in V(G)$, the instruction mentioned in Line 3 can be computed in $\mathcal{O}(d_{max})$ time. The execution time of Line No $5$ is of $\mathcal{O}(n)$. Hence, the computational time from Line $1$ to $5$ is of $\mathcal{O}(n^{2}+n d_{max}+n)$. As, $d_{max}$ is upper bounded by $\mathcal{O}(n)$, hence the quantity reduces to $\mathcal{O}(n^{2})$. It is important to observe that marginal contribution of any player can be computed by traversing the graph, which takes $\mathcal{O}(m+n)$. In each permutation, there are $n$ players. Hence, considering $\tau$ number of permutations and repeating the same experiment for $\mathcal{R}$ times  requirement for marginal gain computation is of $\mathcal{O}(\tau \cdot n \cdot (m+n) \cdot \mathcal{R})$ time. After that, computing the shapley value requires additional $\mathcal{O}(n)$ time. Hence, total time requirement is of $\mathcal{O}(n^{2} + \tau \cdot n \cdot (m+n) \cdot \mathcal{R})$ and this reduces to $\mathcal{O}(\tau \cdot n \cdot (m+n) \cdot \mathcal{R})$. The extra space consumed by the Algorithm \ref{Algo:Step1} is to store the arrays $T$, $MG$, $\phi$, also to store the degree of the nodes and all of them requires $\mathcal{O}(n)$ space each. Hence the Lemma \ref{Lemma:3} holds.
\begin{mylemma} \label{Lemma:3}
Running time and space requirement of Algorithm \ref{Algo:Step1} is of $\mathcal{O}(\tau \cdot n \cdot (m+n) \cdot \mathcal{R})$ and $\mathcal{O}(n)$, respectively.
\end{mylemma}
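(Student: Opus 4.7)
The plan is to perform a line-by-line accounting of Algorithm \ref{Algo:Step1}, split naturally into the preprocessing block (Lines 1--5), the main sampling loop (the doubly nested \texttt{for} over $\tau$ permutations and $\mathcal{R}$ repetitions), and the final averaging loop, and then to aggregate the costs. Most of the reasoning is already embedded in the paragraph preceding the lemma statement, so the proof essentially amounts to making that analysis rigorous and collecting the dominant term.

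First I would handle the preprocessing. Computing $\deg(u)$ for every $u \in V(G)$ by a single pass over the adjacency representation takes $\mathcal{O}(n+m)$ time, and the loop of Lines 1--3 then spends $\mathcal{O}(d_{max})$ per node for the neighborhood sum, totalling $\mathcal{O}(n \cdot d_{max})$; since $d_{max} = \mathcal{O}(n)$ this is absorbed into $\mathcal{O}(n^{2})$. The averaging in Line 5 is $\mathcal{O}(n)$. The storage for the arrays $b_u$, $c_u$ and the degree values is $\mathcal{O}(n)$. Generating the $\tau$ permutations can be done with Fisher--Yates in $\mathcal{O}(\tau \cdot n)$ time and, if we do not keep them all in memory simultaneously, $\mathcal{O}(n)$ space.

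Next I would bound the cost of the main loop, which is the dominant term. For a fixed permutation $p_i$ and fixed repetition $j$, the inner \texttt{for} iterates over all $n$ players, and each evaluation of $\nu(S_{u}(p_i) \cup \{u\}) - \nu(S_{u}(p_i))$ reduces, under the MIA model, to computing the marginal influence contributed by $u$, which can be done by a traversal of the relevant arborescences and therefore in $\mathcal{O}(m+n)$ time (this is the standard MIA-based marginal-gain evaluation of \cite{wang2012scalable}). Multiplying out the three nested counters gives $\mathcal{O}(\tau \cdot \mathcal{R} \cdot n \cdot (m+n))$ for the entire sampling phase. The final loop (Lines just before the return) that divides by $\tau$ contributes only $\mathcal{O}(n)$.

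Summing the three contributions yields total running time $\mathcal{O}\bigl(n^{2} + \tau \cdot n \cdot (m+n) \cdot \mathcal{R}\bigr)$, and since $\tau \cdot \mathcal{R} \geq 1$ and $m+n \geq n$, the $n^{2}$ term is dominated, leaving $\mathcal{O}(\tau \cdot n \cdot (m+n) \cdot \mathcal{R})$ as claimed. For space, the only persistent data structures are the arrays $T$, $MG$, $\phi$, the degree array, and the current permutation, each of size $n$, plus $\mathcal{O}(n+m)$ for the graph itself; counting only the extra space used by the algorithm over the input gives $\mathcal{O}(n)$.

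The only delicate step is justifying that a single marginal-gain evaluation really costs $\mathcal{O}(m+n)$ rather than something larger; this is the step I would expect to be the main obstacle, because the cost depends on how the MIIA structures are maintained across calls. I would address it by noting that the per-target MIIA traversal used in \cite{wang2012scalable} to update activation probabilities visits each edge of the arborescence at most a constant number of times, so summing over targets is bounded by the total number of edges and vertices touched, which is $\mathcal{O}(m+n)$; the remaining bookkeeping (updating $T(u)$ and reading $S_{u}(p_i)$ from the current permutation) is $\mathcal{O}(1)$ per player per repetition, which does not change the asymptotics.
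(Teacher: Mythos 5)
Your proof is correct and follows essentially the same line-by-line accounting as the paper: an $\mathcal{O}(n^{2})$ bound for the preprocessing of Lines 1--5, an $\mathcal{O}(m+n)$ graph-traversal cost per marginal-gain evaluation multiplied across the $\tau \cdot \mathcal{R} \cdot n$ evaluations, and an $\mathcal{O}(n)$ tally of the arrays $T$, $MG$, $\phi$ and the degree values for space. Your added justifications (why the $n^{2}$ term is dominated, and the cost of generating the permutations) only make the paper's argument slightly more explicit.
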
 

As mentioned previously, once the shapley value of the players are computed, the seed set can be selected either of the two ways. In Algorithm \ref{Algo:Step21}, first the nodes of the players are sorted based on the shapley value computed in Algorithm \ref{Algo:Step1}. This step requires $\mathcal{O}(n \log n)$ time. Let, $d_{max}$ denotes the maximum degree of the network. Now, from Line $4$ to $14$ requires $\mathcal{O}(n d_{max})$ time. Hence, the running time of Algorithm \ref{Algo:Step21} is of $\mathcal{O}(n \log n + n d_{max})$ time. Extra space required by Algorithm \ref{Algo:Step21} is to store the array $Flag$ and the seed set $S$, which is of $\mathcal{O}(n+|S|)=\mathcal{O}(n)$. Hence, Lemma \ref{Lemma:4} holds.
\begin{mylemma} \label{Lemma:4}
The running time and space requirement of Algorithm \ref{Algo:Step21} is of $\mathcal{O}(n \log n + n d_{max})$, and  $\mathcal{O}(n)$, respectively.
\end{mylemma}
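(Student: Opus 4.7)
The plan is to decompose Algorithm \ref{Algo:Step21} into its constituent steps and bound the time and space of each independently. First, sorting the $n$ nodes by the Shapley values stored in $\phi$ (Line 1) takes $\mathcal{O}(n \log n)$ time using any standard comparison\mbox{-}based algorithm, and uses $\mathcal{O}(n)$ auxiliary space. Initializing the seed set $S$ and the \texttt{Flag} array of size $n$ (Lines 2--3) then costs $\mathcal{O}(n)$ time and space.

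Next I would analyze the main loop (Lines 4--14). The outer \texttt{for} loop executes at most $n$ times. Within each iteration, the budget test and the possible insertion into $S$ together with the budget decrement cost $\mathcal{O}(1)$, while the inner loop that clears \texttt{Flag} entries of $\mathcal{N}(u_i)$ runs at most $|\mathcal{N}(u_i)| \leq d_{max}$ times. So each outer iteration costs $\mathcal{O}(d_{max})$, and the main loop as a whole contributes $\mathcal{O}(n \cdot d_{max})$. Adding this to the sorting and initialization costs yields the overall running time of $\mathcal{O}(n \log n + n \, d_{max})$.

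For space, the only non\mbox{-}input data structures maintained are the sorted list of nodes, the \texttt{Flag} array, and the output seed set $S \subseteq V(G)$. Each of these has size at most $n$, so the total extra space used by the algorithm is $\mathcal{O}(n)$, as claimed.

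I do not expect a serious obstacle here, but the one step that needs a careful reading is the inner \texttt{while $B > 0$} construct inside the \texttt{for} loop over $i$. One has to observe that once the remaining budget drops below $\min_{u \in V(G)} C(u)$ no further insertion can occur, and the neighborhood clearing is a one\mbox{-}shot operation tied to $u_i$, so this \texttt{while} effectively acts as a single guarded check per outer iteration rather than an open\mbox{-}ended inner loop. With this observation the $\mathcal{O}(1)$ cost per iteration (besides the neighborhood scan) is unambiguous, and the stated bounds follow.
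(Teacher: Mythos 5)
Your proof is correct and follows essentially the same decomposition as the paper: $\mathcal{O}(n\log n)$ for sorting by Shapley value, $\mathcal{O}(n\,d_{max})$ for the main loop with its neighborhood\mbox{-}clearing step, and $\mathcal{O}(n)$ space for the \texttt{Flag} array and seed set. Your closing remark about reading the inner \texttt{while $B>0$} as a single guarded check per outer iteration is a point the paper glosses over entirely, and it is indeed needed for the bound to hold as the pseudocode is literally written.
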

Algorithm \ref{Algo:Step22} describes another way of selecting seed nodes. In this method, first the community structure is detected. This can be done by the Louvian Algorithm, which takes $\mathcal{O}(n \log n)$ time \footnote{\url{https://en.wikipedia.org/wiki/Louvain_modularity}}. The array $Community$ stores the community number corresponding to each user, i.e., $Community[i]=x$ means the user $u_i$ belongs to the $x$\mbox{-}th community of the network. Assume that there  are $\ell$ number of communities of the network. Sorting the communities will require $\mathcal{O}(\ell \log \ell)$ time. Computing the total shapley value of the network requires $\mathcal{O}(n)$ time. Line $10$ to $16$ actually shows the budget distribution among the communities, which takes $\mathcal{O}(n \ell)$ time. Line $18$ to $29$ describes the seed set selection process. Now, it is important to observe that the running time of this phase depends upon the number of nodes that the community contains. To give a weak upper bound, we first calculate the running time for the seed selection for the largest community, and then multiply it with the number of communities. Let, $n_{max}$ denotes the number of nodes in the largest community. $\mathcal{O}(n)$ time is required to identify the nodes belongs to that community and sorting them based on the values computed in the array $\phi$ requires $\mathcal{O}(n_{max} \log n_{max})$ time. Now, selecting the nodes from the sorted list requires $\mathcal{O}(n_{max})$ time. So, the time requirement for the execution of the largest is of $\mathcal{O}(n + n_{max} \log n_{max} + n_{max})= \mathcal{O}(n + n_{max} \log n_{max})$. Additionally, during the processing of communities other than the largest one, transferring the unutilized budget of the community to the largest community requires $\mathcal{O}(1)$ time. Hence, time requirement for the execution from Line $18$ to $29$ requires $\mathcal{O}(n \ell + \ell n_{max} \log n_{max}+\ell)=\mathcal{O}(n \ell + \ell n_{max} \log n_{max})$. Hence, total time requirement of Algorithm \ref{Algo:Step22} requires $\mathcal{O}(n \log n + \ell \log \ell + n \ell+n \ell + \ell n_{max} \log n_{max})= \mathcal{O}(n \log n + \ell \log \ell + n \ell + \ell n_{max} \log n_{max})$. Extra space consumed by Algorithm \ref{Algo:Step22} is to store the array $Community$, which requires $\mathcal{O}(n)$ space; the array $B_{K}$, which requires $\mathcal{O}(\ell)$ space, vertices of the communities during seed set selection, which requires $\mathcal{O}(n_{max})$ space, and storing the seed set which requires $\mathcal{O}(|S|)$ space. Hence, total space requirement of Algorithm \ref{Algo:Step22} is of $\mathcal{O}(n + \ell +n_{max}+ |S|)= \mathcal{O}(n + \ell)$. Hence, Lemma \ref{Lemma:5} holds.
\begin{mylemma} \label{Lemma:5}
Running time and space requirement of Algorithm \ref{Algo:Step22} is of $\mathcal{O}(n \log n + \ell \log \ell + n \ell + \ell n_{max} \log n_{max})$ and $\mathcal{O}(n + \ell)$, respectively.
\end{mylemma}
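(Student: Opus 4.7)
My plan is to walk through Algorithm \ref{Algo:Step22} block by block, bound the cost of each block in terms of $n$, $\ell$, and $n_{max}$, and then take the sum. First I would account for the pre\mbox{-}processing: Line 1 runs the Louvain algorithm, which costs $\mathcal{O}(n \log n)$; sorting the $\ell$ detected communities on Line 3 costs $\mathcal{O}(\ell \log \ell)$; identifying the largest community on Line 4 and computing $Total\_Shapley$ in the loop on Lines 5--7 is a single pass over the $n$ players, giving $\mathcal{O}(n)$.

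Next I would analyze the budget distribution loop on Lines 9--16. For each of the $\ell$ communities, the inner loop sums the Shapley values of the nodes belonging to that community, so in the worst case this is $\mathcal{O}(n)$ per community and $\mathcal{O}(n\ell)$ overall. Then for the main seed\mbox{-}selection loop on Lines 18--29 I would bound the per\mbox{-}community cost: inside community $K_i$, sorting its vertex set by $\phi$ costs $\mathcal{O}(n_{max} \log n_{max})$ in the worst case, and the greedy scan that picks seeds while $B_{K_i} \ge 0$ is linear in $n_{max}$; transferring the leftover budget to $K_{max}$ is $\mathcal{O}(1)$. Summing across all $\ell$ communities (and including the $\mathcal{O}(n)$ needed to locate each community's members inside the outer loop), this block is $\mathcal{O}(n\ell + \ell\, n_{max} \log n_{max})$. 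Adding these contributions collapses to $\mathcal{O}(n \log n + \ell \log \ell + n\ell + \ell\, n_{max}\log n_{max})$, as claimed.

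For the space bound I would enumerate the auxiliary structures allocated by the algorithm: the $Community$ array needs $\mathcal{O}(n)$ entries; the per\mbox{-}community budget array $B_K$ needs $\mathcal{O}(\ell)$ entries; the temporary sorted vertex list for the current community needs $\mathcal{O}(n_{max})$ space, reusable across iterations; and the seed set $S$ needs $\mathcal{O}(|S|) = \mathcal{O}(n)$. Since $n_{max} \le n$ and $|S| \le n$, these collapse to $\mathcal{O}(n + \ell)$.

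The routine pieces are the Louvain cost (quoted from the cited reference) and the arithmetic of summing the bounds. The only step I expect to require a bit of care is the seed\mbox{-}selection block on Lines 18--29: the natural naive bound would give $\mathcal{O}(\sum_i |V(K_i)| \log |V(K_i)|)$, and I want to argue that replacing $|V(K_i)|$ by $n_{max}$ inside the logarithm and summing over $\ell$ communities is a legitimate (if loose) upper bound, which is exactly what the statement asserts. Once that weak\mbox{-}bound step is justified, everything else is bookkeeping.
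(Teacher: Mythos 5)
Your proposal is correct and follows essentially the same line-by-line accounting as the paper's own analysis: Louvain in $\mathcal{O}(n\log n)$, community sorting in $\mathcal{O}(\ell\log\ell)$, budget distribution in $\mathcal{O}(n\ell)$, and the seed-selection block bounded by taking the largest community's cost $\mathcal{O}(n + n_{max}\log n_{max})$ and multiplying by $\ell$, with the identical enumeration of auxiliary arrays for the space bound. The ``weak upper bound'' step you flag is exactly how the paper justifies it as well, so nothing is missing.
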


Now, after computing the shapley value by Algorithm \ref{Algo:Step1}, if we use Algorithm \ref{Algo:Step21} for selecting the seed nodes then the running time and space requirement of the proposed methodology becomes is of $\mathcal{O}(\tau(m+n)R + n \log n + n d_{max})$, and $\mathcal{O}(n)$, respectively. Otherwise, if we use Algorithm \ref{Algo:Step22} for seed set selection then the running time and space requirement of the proposed methodology becomes $\mathcal{O}(\tau \cdot n \cdot (m+n) \cdot \mathcal{R} + n \log n + \ell \log \ell + n \ell + \ell n_{max} \log n_{max})$, and $\mathcal{O}(n + \ell)$, respectively. Hence, from Lemma \ref{Lemma:3}, \ref{Lemma:4}, and \ref{Lemma:5}, following  two theorems are implied.
\begin{mytheorem}
If Algorithm \ref{Algo:Step21} is used for seed set selection then the running time and space requirement of our proposed methodology becomes $\mathcal{O}(\tau \cdot n \cdot (m+n) \cdot \mathcal{R} + n \log n + n d_{max})$, and $\mathcal{O}(n)$, respectively. 
\end{mytheorem}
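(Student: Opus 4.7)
The plan is to derive the theorem as an immediate consequence of Lemma \ref{Lemma:3} and Lemma \ref{Lemma:4} by observing that the proposed methodology is the sequential composition of Algorithm \ref{Algo:Step1} followed by Algorithm \ref{Algo:Step21}. First I would state explicitly that the proposed pipeline has exactly two phases: Phase A invokes Algorithm \ref{Algo:Step1} to produce the approximate Shapley value array $\phi$, and Phase B feeds $\phi$ into Algorithm \ref{Algo:Step21} to select the seed set $S$ within budget $B$. Since Phase B does not begin until Phase A terminates, and the two phases share the common input $<G(V,E,P),C,B>$ plus the single array $\phi$, the cost analysis reduces to handling sequential composition.

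For the running time, I would add the two bounds: Lemma \ref{Lemma:3} gives $\mathcal{O}(\tau \cdot n \cdot (m+n) \cdot \mathcal{R})$ for Phase A, and Lemma \ref{Lemma:4} gives $\mathcal{O}(n \log n + n d_{max})$ for Phase B. Since both terms can be realized in the worst case and neither dominates the other in general (the first depends on the sampling parameters $\tau,\mathcal{R}$ while the second depends on the sort and the degree distribution), I would keep both explicitly in the sum, yielding the claimed $\mathcal{O}(\tau \cdot n \cdot (m+n) \cdot \mathcal{R} + n \log n + n d_{max})$.

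For the space requirement, the key observation is that sequential phases can reuse storage, so the total extra space is the maximum of the two phase-specific bounds, not their sum. Lemma \ref{Lemma:3} certifies $\mathcal{O}(n)$ extra space for Phase A (storing $T$, $MG$, $\phi$, and the degrees), and Lemma \ref{Lemma:4} certifies $\mathcal{O}(n)$ extra space for Phase B (storing the $\texttt{Flag}$ array and the seed set $S$). I would note that $\phi$ itself is handed from Phase A to Phase B and should be counted only once, which is already consistent with the $\mathcal{O}(n)$ bound on either side. Hence the overall space requirement is $\max\{\mathcal{O}(n), \mathcal{O}(n)\} = \mathcal{O}(n)$, completing the proof.

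There is no real obstacle here; the result is a bookkeeping combination of the two lemmas. The only point that requires a brief justification is the space accounting, specifically that the transferred array $\phi$ is not double-counted and that $\texttt{Flag}$ and the seed set $S$ together remain $\mathcal{O}(n)$ (which follows from $|S|\le n$). Thus the theorem follows directly.
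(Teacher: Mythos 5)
Your proposal is correct and follows essentially the same route as the paper, which likewise obtains the theorem by sequentially composing Algorithm \ref{Algo:Step1} and Algorithm \ref{Algo:Step21}, summing the time bounds from Lemma \ref{Lemma:3} and Lemma \ref{Lemma:4}, and noting that both phases use $\mathcal{O}(n)$ extra space. Your explicit remark that the space bounds combine by taking a maximum (and that $\phi$ is not double-counted) is a small clarification the paper leaves implicit, but it does not change the argument.
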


\begin{mytheorem}
If Algorithm \ref{Algo:Step22} is used for seed set selection then the running time and space requirement of our proposed methodology becomes $\mathcal{O}(\tau \cdot n \cdot (m+n) \cdot \mathcal{R} + n \log n + \ell \log \ell + n \ell + \ell n_{max} \log n_{max})$, and $\mathcal{O}(n + \ell)$, respectively.
\end{mytheorem}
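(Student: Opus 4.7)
The plan is to obtain the theorem as a straightforward composition of the two preceding complexity lemmas, since the proposed methodology is just the sequential execution of Algorithm \ref{Algo:Step1} followed by Algorithm \ref{Algo:Step22}. First I would observe that given a BIM instance $\langle G(V,E,P), C, B\rangle$, the pipeline has exactly two phases: phase (i) computes the approximate Shapley vector $\phi$ (Step 1), and phase (ii) consumes $\phi$ together with the instance to produce the seed set $S$ (Step 2, Method 2). Since the phases run one after another and share only the array $\phi$, total running time is the sum of the two phase times, and total working space is the maximum of the two phase spaces plus the persistent size of $\phi$.

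Next, I would simply invoke the already-established Lemma \ref{Lemma:3} for phase (i), giving time $\mathcal{O}(\tau \cdot n \cdot (m+n) \cdot \mathcal{R})$ and space $\mathcal{O}(n)$; and Lemma \ref{Lemma:5} for phase (ii), giving time $\mathcal{O}(n \log n + \ell \log \ell + n\ell + \ell\, n_{max}\log n_{max})$ and space $\mathcal{O}(n+\ell)$. Adding the time bounds yields
\[
\mathcal{O}\!\left(\tau \cdot n \cdot (m+n) \cdot \mathcal{R} + n \log n + \ell \log \ell + n\ell + \ell\, n_{max}\log n_{max}\right),
\]
which is exactly the claimed time bound. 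For space, the array $\phi$ of size $n$ used across both phases is already counted in the $\mathcal{O}(n)$ of phase (i) and dominated by the $\mathcal{O}(n+\ell)$ of phase (ii); hence the overall space is $\max\{\mathcal{O}(n), \mathcal{O}(n+\ell)\} = \mathcal{O}(n+\ell)$, matching the claim.

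There is no real obstacle here beyond bookkeeping: the only point worth being careful about is that phase (i) must keep $\phi$ live when phase (ii) starts, which is why $\phi$ contributes to the peak space, but because $|\phi| = n$ it is absorbed into $\mathcal{O}(n+\ell)$. I would close by remarking that neither phase shares asymptotically dominant intermediate structures with the other, so no cross-phase simplification is possible, and the additive combination is tight up to constants. This establishes the theorem directly from Lemmas \ref{Lemma:3} and \ref{Lemma:5}.
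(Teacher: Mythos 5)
Your proposal is correct and follows essentially the same route as the paper, which likewise obtains the theorem by summing the time bounds and combining the space bounds from Lemma \ref{Lemma:3} and Lemma \ref{Lemma:5} for the two sequential phases. Your extra remark about keeping $\phi$ live across phases is a minor refinement of the same bookkeeping argument.
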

\section{Experimental Evaluation} \label{Sec:EE}
In this section, we describe the experimental evaluation of the proposed methodologies. Also, the obtained results have been compared with that of results obtained from the existing methods from the literature. Initially, we start with a brief description of the datasets used in our experiments. 
\subsection{Datasets} We use the following three datasets appeared in two  different situations. First two datasets are basically collaboration networks among the researchers of two different areas. The third one is the trust network among the users of the E-Commerce house \texttt{epinions.com}. 
\begin{itemize}
\item \textbf{HEP Theory Collaboration Network (HEPT)} \cite{leskovec2007graph} \footnote{\url{https://arxiv.org/archive/hep-th .}}: This dataset contains the collaboration information among the high energy physics researchers obtained from the papers submitted in the high energy physics section of \url{arxiv.org}. Here, individual researchers are the nodes and two nodes are linked by an edge if the corresponding researchers co\mbox{-}authored atleast one paper. 
\item \textbf{Condensed Matter Physics Collaboration Network (CMP)} \footnote{\url{https://arxiv.org/archive/cond-mat}} \cite{leskovec2007graph}: This is also a collaboration network and obtained by connecting the researchers of the condensed matter physics section of the \url{arxiv.org}.
\item \textbf{Epinions Dataset} \footnote{\url{http://www.epinions.com/ }} \cite{richardson2003trust}: This dataset contains `who trust whom' information among the users of a review site named `Epinions'. Here, the users form the vertex set of the network and there is a directed edge between $u_i$ and $u_j$ if and only if $u_i$ trusts $u_j$.
\end{itemize}
All these datasets have been previously used in the experimentation in the domain of influence maximization \cite{jung2012irie, tong2016adaptive,wen2020identification}. Table \ref{Tab:Data_Stat} contains basic statistics of the mentioned datasets.
\begin{table}[H]
\centering
\caption{Basis Statics of the Datasets}
\label{Tab:Data_Stat}
 \begin{tabular}{||c c c c c c c||} 
 \hline
 Dataset Name & Type & $n$ & $m$ & $d_{avg}$ & $|K|$ & Modularity \\ [0.5ex] 
 \hline\hline
 HEPT & Undirected & 9877 & 25998 & 5.26 & 481 & 0.76382\\ 
 CMP & Undirected &23133 & 93497 & 8.08 & 20 & 0.68561\\
 Epinions & Directed & 75879 & 508837 & 13.41 & 296 & 0.79866\\ [1ex] 
 \hline
 \end{tabular}
\end{table}
\subsection{Experimental Setup}
Now, we state the experimental set up that has been used in this paper. Following parameters are there in our study whose value need to be set:
\begin{itemize}
\item \textbf{Selection Cost}: We assign an integer values as the selection cost of the users from the interval $[50,100]$ uniformly at random. In \cite{nguyen2013budgeted}'s study also cost of the users are chosen randomly from a fixed interval. 
\item \textbf{Budget}: In our experiments, we start with a budget value of $2000$ and each time incremented by $4000$ and continued till $26000$. Hence, we experiment with following budget values $B=\{2k, 6k,10k, 14k, 18k, 22k, 26k\}$. In \cite{nguyen2013budgeted}'s study also experimentation is carried out with some fixed budget values.
\item \textbf{Influence Probability}: As per the existing studies in the literature, in this study also we consider the following three influence probability settings:
\begin{itemize}
\item \textbf{Uniform}: In this setting, every edge of the network is associated with the same probability value. In this study, we consider this fixed value as $0.1$ and $0.2$. These two value has been considered by many existing studies. 
\item \textbf{Tri\mbox{-}Valency}: In this setting all the edges of the network are assigned with a fixed probability from the set $\{0.1, 0.01,0.001\}$ uniformly at random. This setting has also been considered  in many existing studies in the literature.
\item \textbf{Weighted Cascade}: In this setting every edge $(uv)$ has the influential probability which is equal to the reciprocal of the degree of $v$; i.e.; $P_{uv}=\frac{1}{deg(v)}$. In case of directed network $deg(v)$ will be replaced by $indeg(v)$. This setting is also common in many existing studies.
\end{itemize}
All these influence probability settings have been used in existing studies in and around the influence maximization problem \cite{hong2019seeds,chen2010scalable,logins2019content,cohen2014sketch}.
\item \textbf{Range of the marginal gain of each player}: As mentioned previously, to compute the shapley value of a node, we need to know its range of the marginal gain of the players. For any node  $u \in V(G)$, we compute its range as follows: Consider the neighbors of the node, i.e.,  $N(u)$. Consider a node $w \in N(u)$. Now, $w$ can be influenced by any of its neighbors. The upper value of this range is $(\frac{deg(u)}{2}+\underset{w \in N(u)}{\sum} \frac{1}{deg(w)-1})$. In the worst case may be none of the neighbors will be influenced, and hence, the lower value of the range is $0$. So, the range of the marginal gain of the player $u$ is $[0,\frac{deg(u)}{2}+\underset{w \in N(u)}{\sum} \frac{1}{deg(w)-1}]$. 
\end{itemize}
\subsection{Goals of the Experiment}
Here, we mention the goals of the experiments, which are mentioned below:
\begin{itemize}
\item The primary goal of our experimentation is to make a comparative study among the proposed as well as existing methodologies regarding the  quality of the seed set that they can select.
\item Our second concern is the computational time. We also make a comparative study of the proposed as well as baseline methods based on their computational time requirement.  
\end{itemize}
\subsection{Algorithms in the Experiment}
Now, we mention the algorithms that are there in our experiments: 
\subsubsection{Algorithms Proposed in this Paper}
\begin{itemize}
\item \textbf{BIM with Co-Operative Game Theory (BIMGT)}: In this method, the marginal contribution in `shapley value' for every user of the network is computed and the users are sorted based o this value. Users are chosen from this sorted list until the budget is exhausted. This is basically `Method 1' of this paper.
\item \textbf{BIM with Co-Operative Game Theory and Community Structure (BIMGTC)}: In this method, first the community structure of the network has been exploited and based on the total shapley value of the nodes of the community the total budget is divided among the communities. Subsequently, the nodes with high shapley value is chosen as seed nodes from each of the communities until the allocated budget from each of the communities are exhausted. 
\end{itemize}
\subsubsection{Algorithms from the Literature}
We compare the performance of the proposed solution approaches with the following existing methods from the literature:
\begin{itemize}
\item \textbf{Random (RAND)}: In this method, starting with an empty seed set in each iteration among the non-seed nodes any arbitrary node is chosen uniformly at random and put it into the seed set. This process is repeated until the allocated budget is exhausted. Many existing studies considers this method as a baseline \cite{kempe2003maximizing,wu2014coritivity}.
\item \textbf{Maximum Degree Heuristic (MDH)}: In this method the nodes are sorted based on its degree. Next, the high degree nodes are chosen as the seed set until the budget is exhausted. This method has been used in many existing studies on influence maximization \cite{narayanam2010shapley,wu2014coritivity,shang2017cofim}.
\item \textbf{Maximum Clustering Co-efficient Heuristic (MCCH)}: Working principle of this method is same as the MDH, though instead of degree, in this method the clustering co-efficient of the nodes are computed. Nodes are sorted based on this value and the nodes with high clustering efficient values are chosen as seed nodes unless the allocated budget is exhausted. Existing studies on influence maximization have used this method as a baseline \cite{narayanam2010shapley}.
\item \textbf{DAG Based Heuristic for BIM Problem (DAGHEU)} \cite{nguyen2013budgeted}: This is the first study on the BIM Problem that contains a number of solution methodologies. Among them we compare the performance of our proposed methodologies with DAG1-SPBP. This is the most efficient and effective as per their claim.  
\item \textbf{Balanced Seed Selection Heuristic (BSSH)} \cite{han2014balanced}: In this method, first the nodes are divided into two groups: one is `cost effective' nodes and `influential' nodes. Finally, the seed nodes are chosen in an intelligent way from both the sets in an efficient way. 
\item \textbf{Community-Based Approach for the BIM Problem} \cite{banerjee2019combim}: This is the first community\mbox{-}based solution approach for the BIM Problem by \cite{banerjee2019combim}. In this method the total allocated budget is divided based on `node fraction' and `cost fraction' among the communities. Subsequently, the high degree nodes from the communities are chosen as seed nodes until the community specific budget is exhausted.
\end{itemize}
All these methodologies have been implemented in \emph{Python 3.4 + NetworkX 2.0.1} environment. Experiments of this study are performed in an workstation with Intel® Xeon(R) W-1290 CPU \@ 3.20GHz × 20  and 16 GB of RAM.
\subsection{Experimental Results with Discussions}
Now, we describe the experimental results with detailed discussion. First we discuss the impact of influence spread  for different budget values by different algorithms.
\subsubsection{Impact of Influenced Spread}
Figure \ref{Fig:Collab2} shows the budget vs. number of influenced nodes plots for the HEPT dataset under different influence probability settings. From this figure it can be observed that the seed set selected by the proposed methodologies leads to the more number of influenced nodes compared to the baseline methods. Now, we give one example. For uniform probability setting with $p_c=0.2$ and $\mathcal{B}=26000$, among the existing methodologies the seed set selected by \textbf{BIMGTC} leads to the maximum number of influenced nodes which is $3781$. On the other hand, among the existing methods the seed set chosen by \textbf{ComBIM} leads to the maximum number of influenced nodes which is $3362$. This is almost $12.5 \%$ more compared to the \textbf{ComBIM}. This observation is consistent over different influence probability settings. From Table \ref{Tab:Data_Stat}, it can be observed that the number of nodes of the HEPT dataset is $9877$. Hence, the influence coverage of the seed sets selected by \textbf{BIMGTC} and  \textbf{ComBIM} are approximately $38.28 \%$ and $34.04$. So, there is a gap of almost $3.2 \%$. It is also observed that the number of influenced nodes are least in case of trivalency model and this is irrespective of any method. As an example, for $\mathcal{B}=26000$ the seed set selected by the \textbf{BIMGTC} method leads to $1696$, $3781$, $915$, and $3112$ number of influenced nodes for uniform (with $p_c=0.1$ and $p_c=0.2$), trivalency, and weighted cascade models respectively.

\begin{figure*}[t]
\centering
\begin{tabular}{ccc}
\includegraphics[scale=0.2]{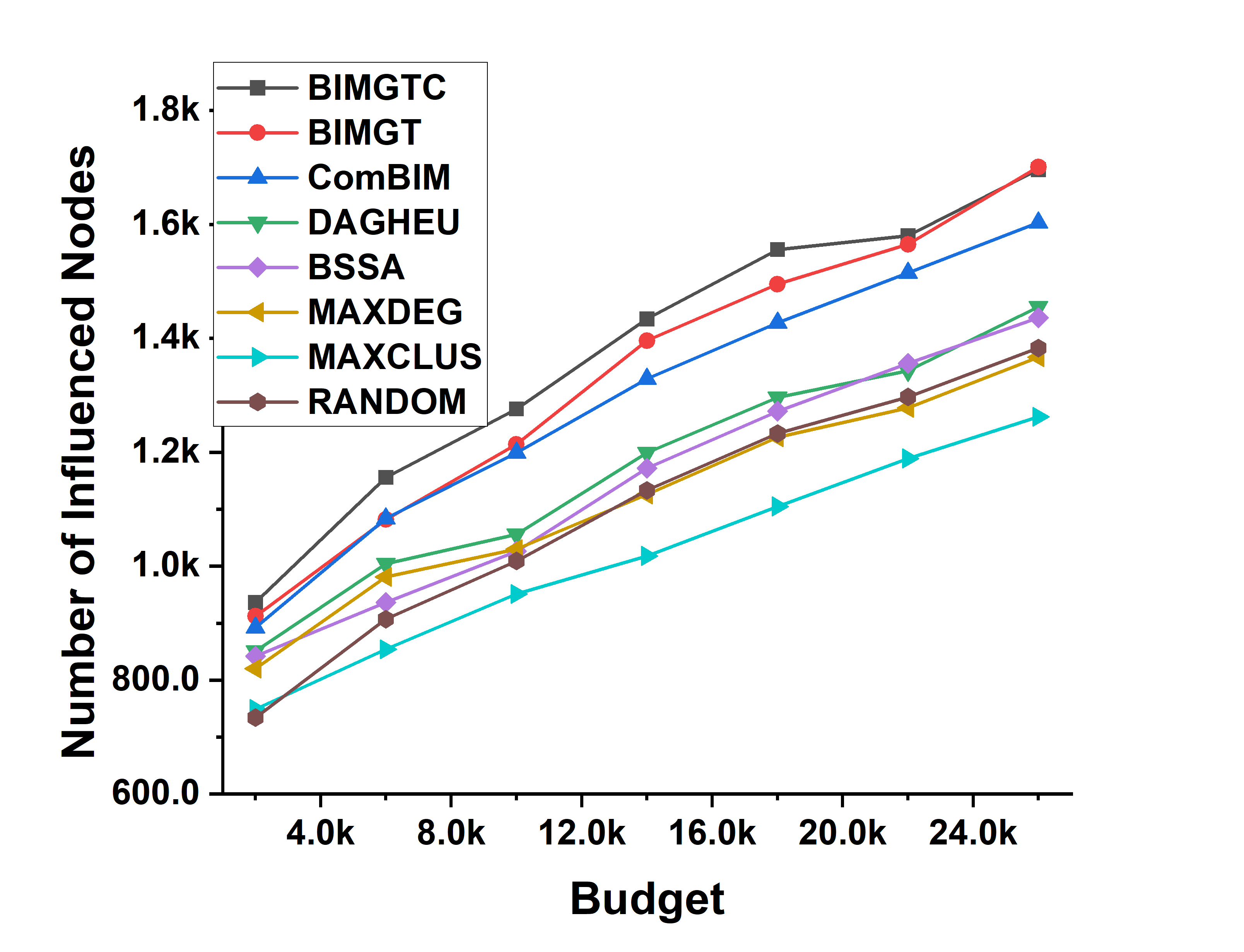} & \includegraphics[scale=0.2]{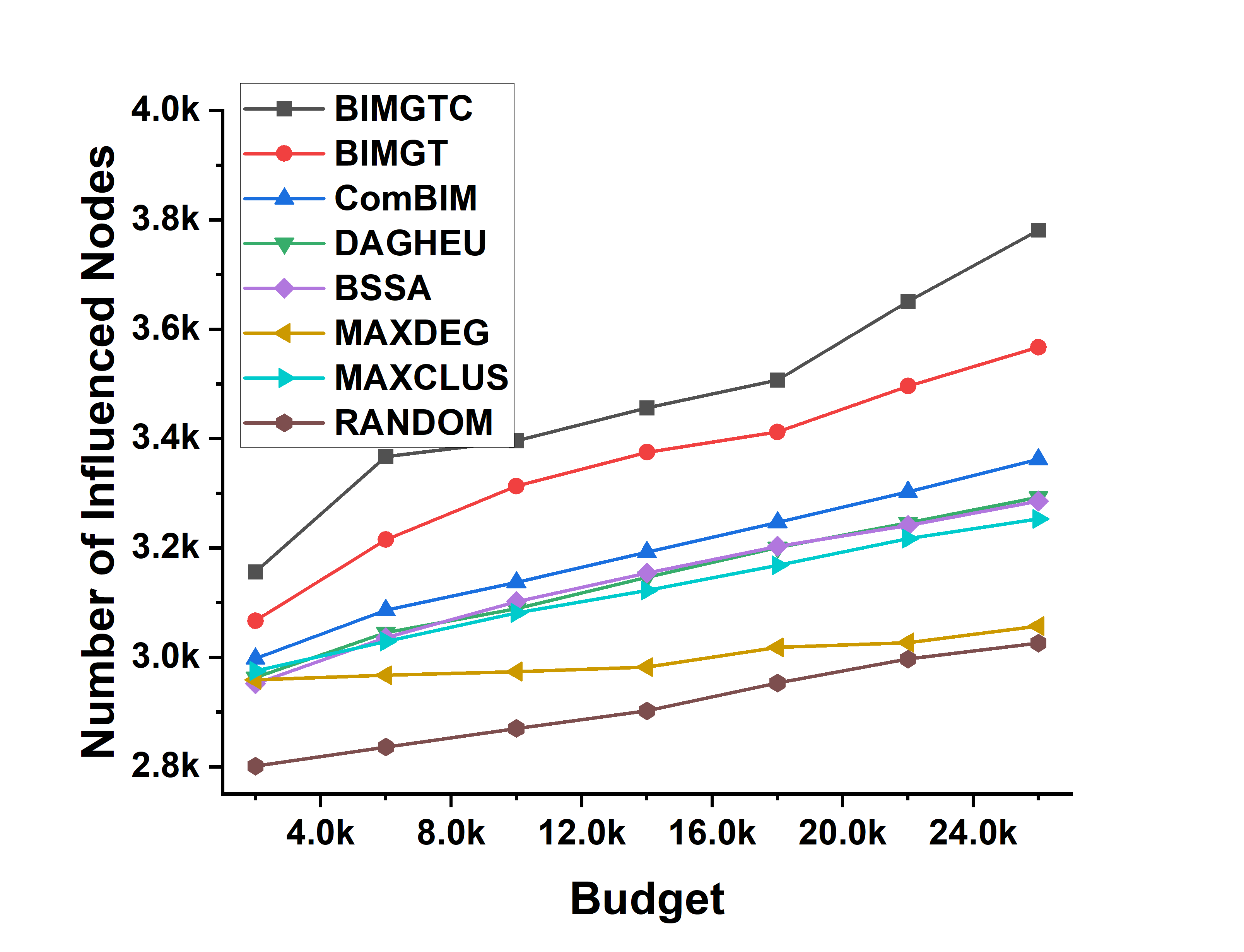} \\
(a) CMP Dataset ($p_c=0.1$)  & (b) CMP Dataset  ($p_c=0.2$) \\
\includegraphics[scale=0.2]{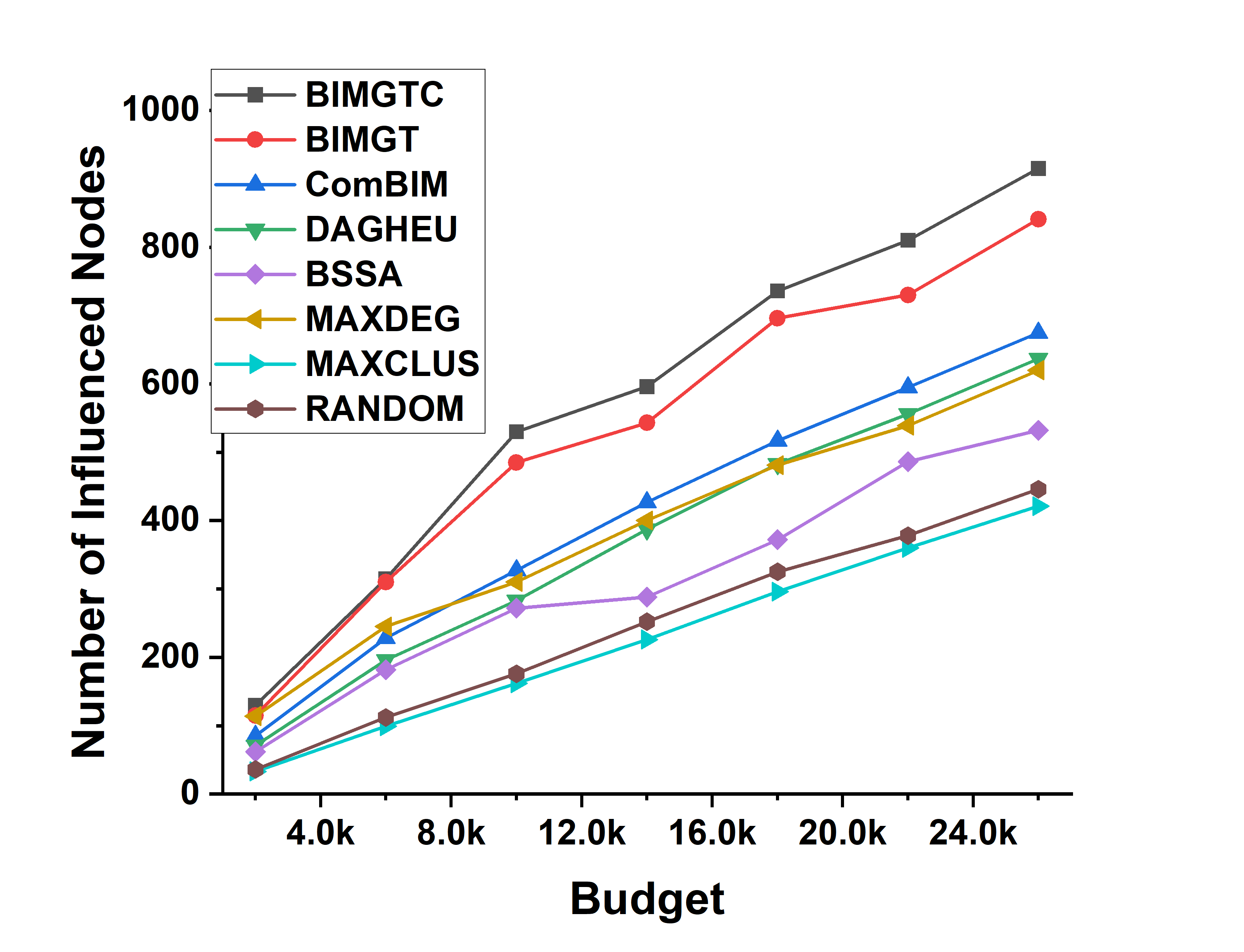} & \includegraphics[scale=0.2]{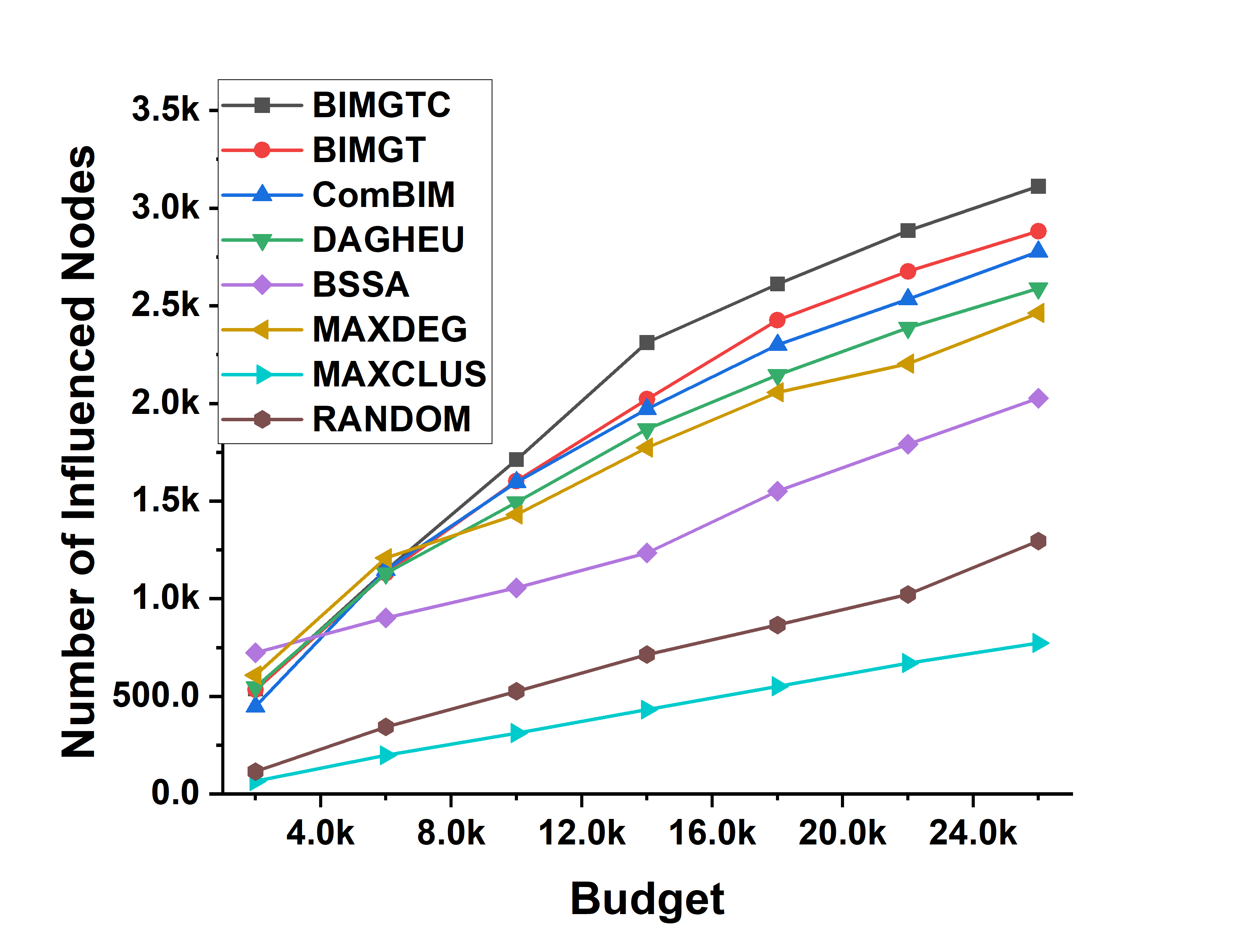} \\

(c) CMP Dataset (Trivalancy) & (d) CMP Dataset  (Weighted Cascade) \\
\end{tabular}
\caption{Budget Vs. Number of Influence Plots for the \textbf{Collab2}  dataset under the Uniform (with $p_c=0.1$ and $0.2$), Trivalency, Weighted Cascade and Count Probability settings.}
\label{Fig:Collab2}
\end{figure*}


\begin{figure*}[t]
\centering
\begin{tabular}{ccc}
\includegraphics[scale=0.2]{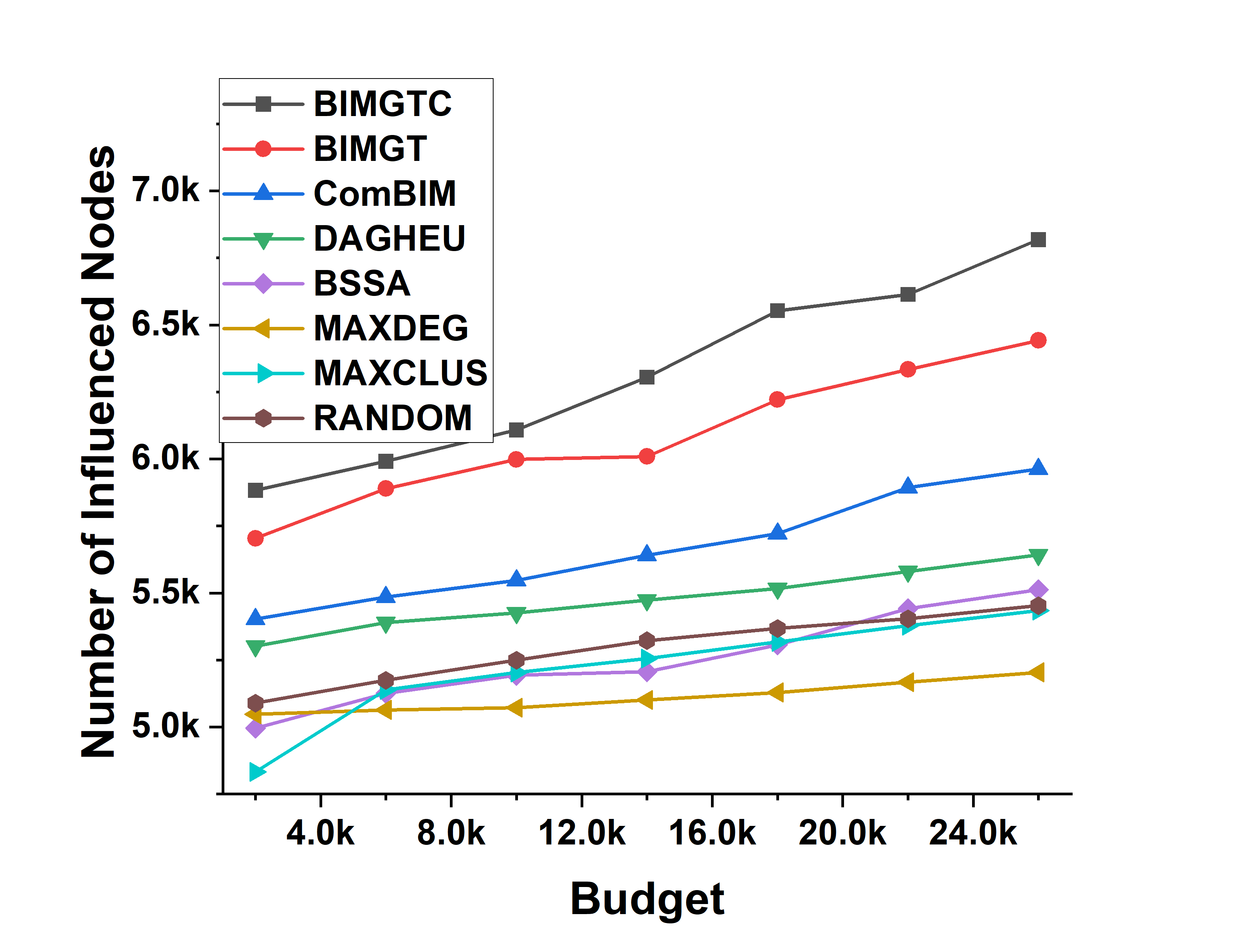} & \includegraphics[scale=0.2]{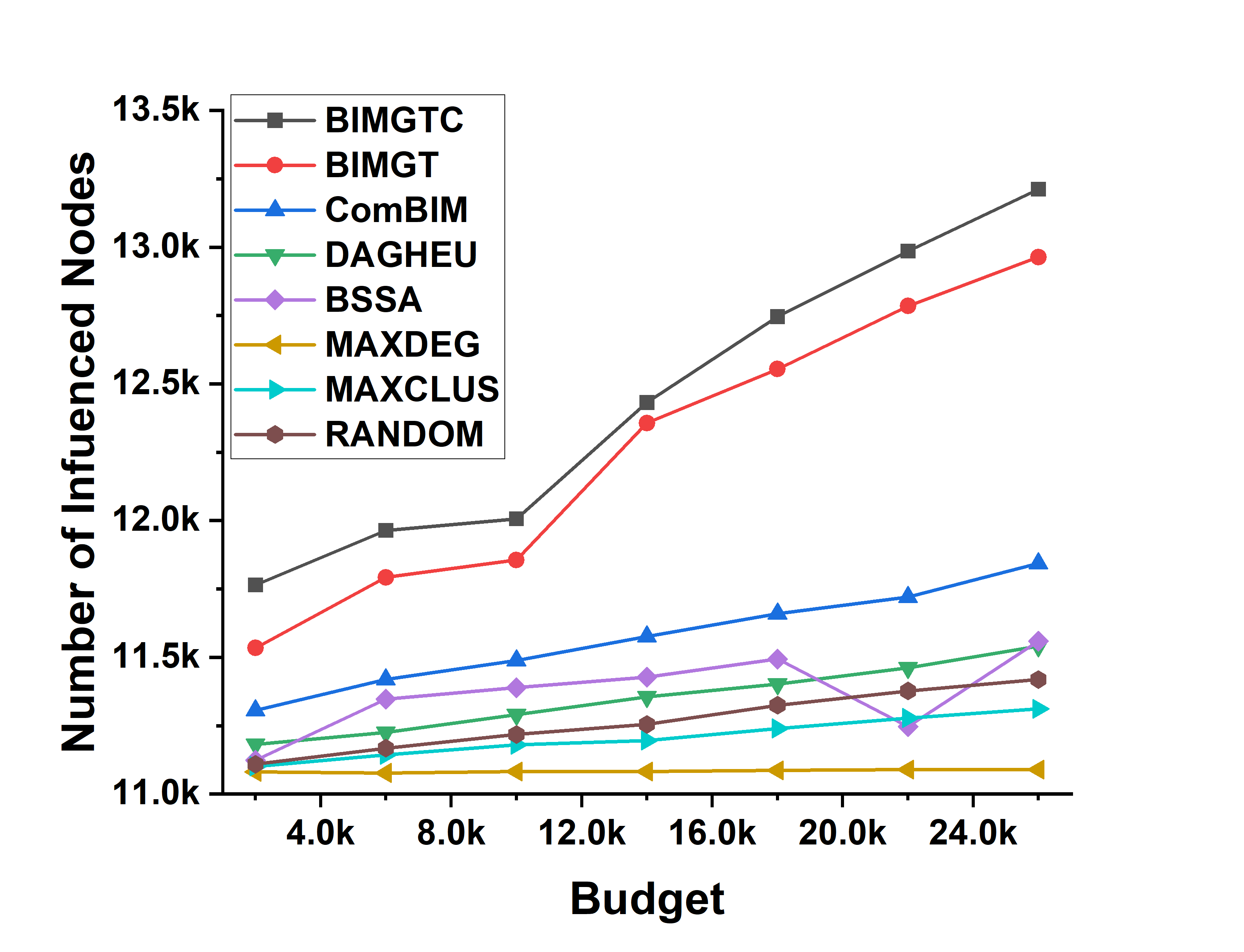} \\
(a) CMP Dataset ($p_c=0.1$)  & (b) CMP Dataset  ($p_c=0.2$) \\
\includegraphics[scale=0.2]{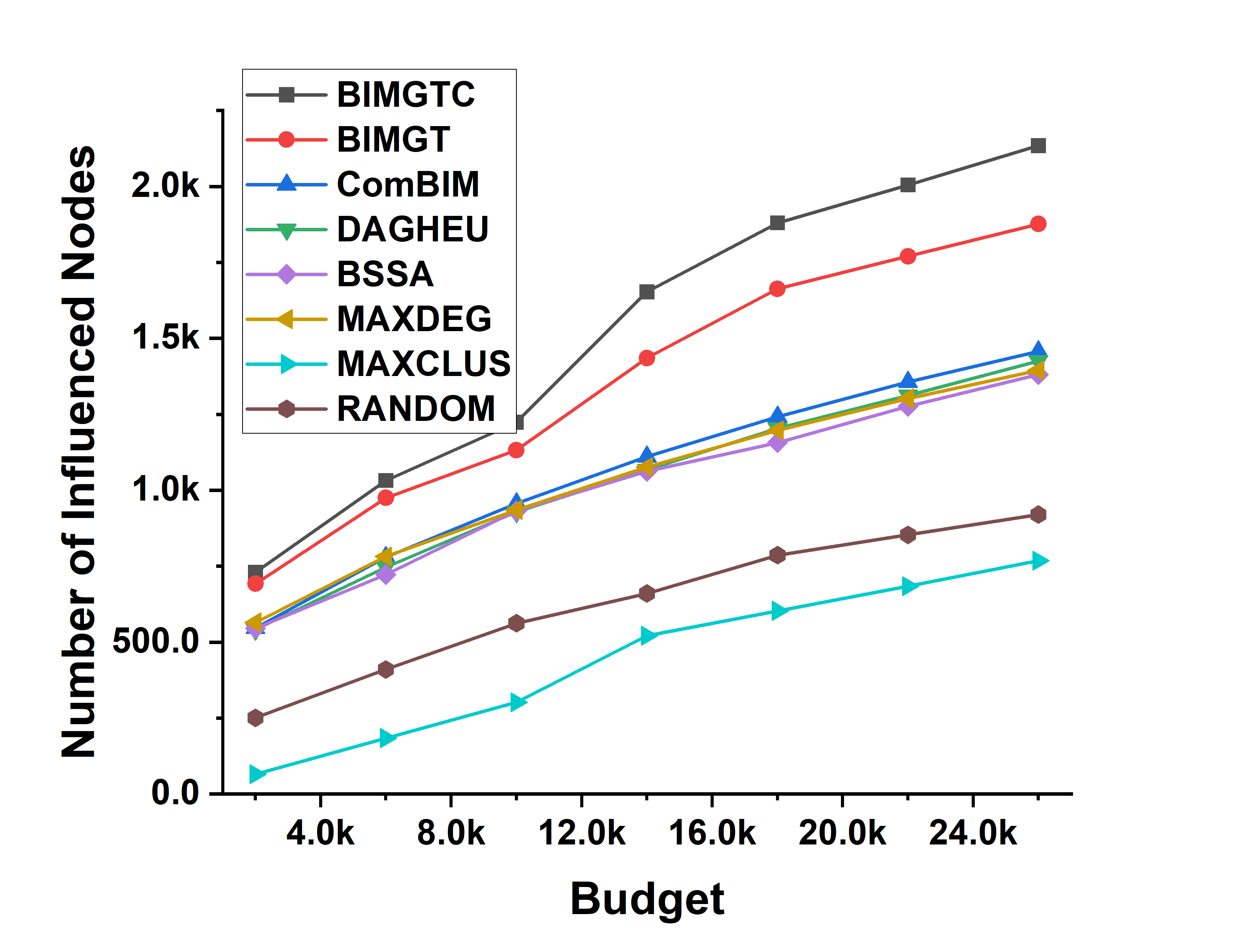} & \includegraphics[scale=0.2]{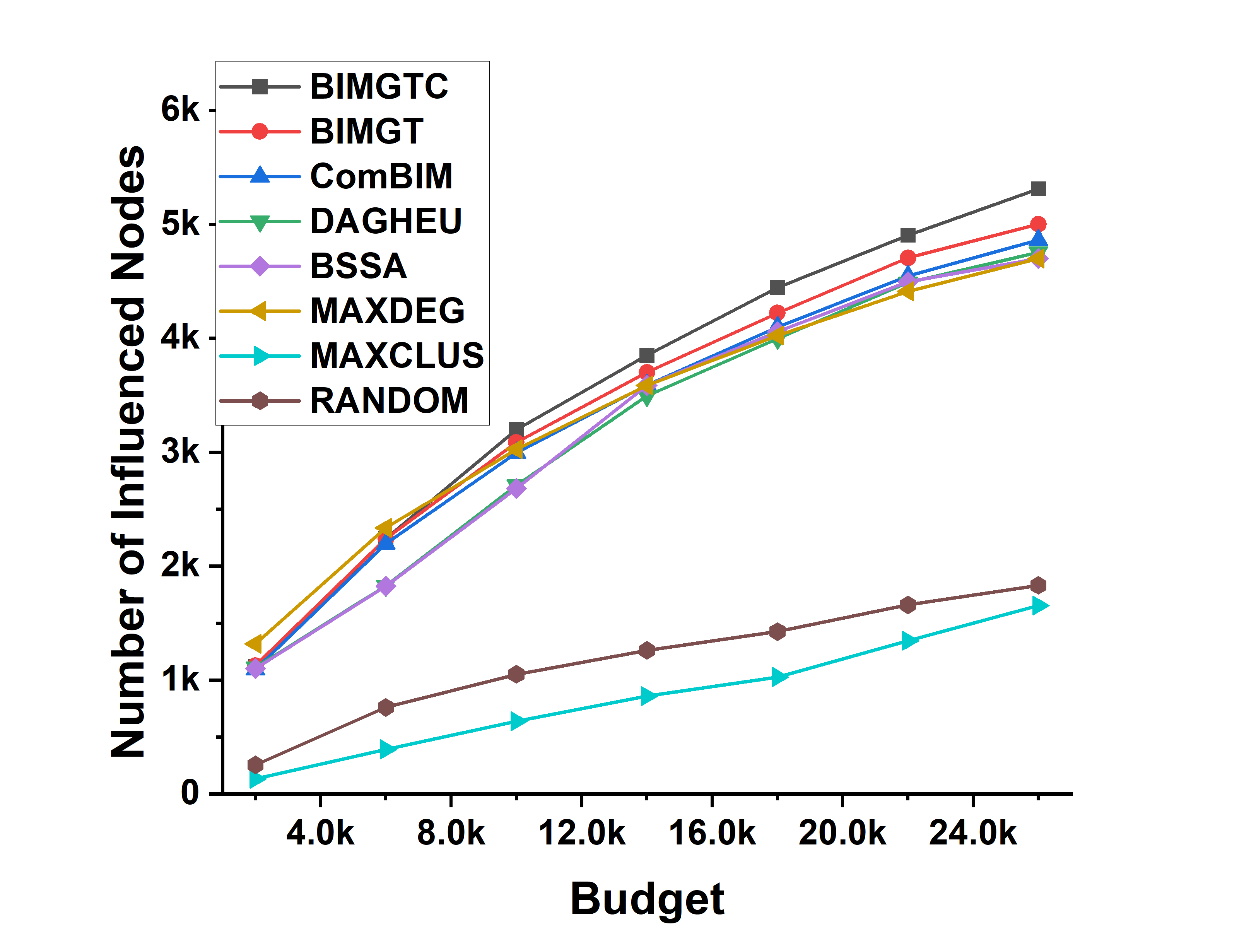} \\

(c) CMP Dataset (Trivalancy) & (d) CMP Dataset  (Weighted Cascade) \\
\end{tabular}
\caption{Budget Vs. Number of Influence Plots for the \textbf{CMP Dataset}  dataset under the Uniform (with $p_c=0.1$ and $0.2$), Trivalency, Weighted Cascade and Count Probability settings.}
\label{Fig:Collab3}
\end{figure*}

\par Next, we describe the obtained results for the CMP Dataset. Figure \ref{Fig:Collab3} shows the budget vs. number of influenced nodes plot for the CMP Dataset. Like HEPT Dataset, in this dataset also the seed set selected by the proposed methodologies leads to the more number of influenced nodes compared to the existing methods. Here, we give an example. Under uniform probability setting when all the edges of the network have the influence probability of $0.2$, the seed set selected by the \textbf{BIMGTC} method leads to more number of influence nodes which is $13212$, which is almost $57.17 \%$ of the network. Among the existing  methods, \textbf{ComBIM} leads to a seed set that results to $11844$ number of influenced nodes, which is $51.2 \%$. Hence, there is a gap of almost $6 \%$ in terms of influence coverage. It is also important to observe that in this dataset also exploitation of the community structure leads to more amount of influenced nodes for the proposed methodology. As an example, for $\mathcal{B}=26000$ under the uniform probability setting with $p_c=0.2$, the seed set selected by \textbf{BIMGT} and \textbf{BIMGTC} leads to $12964$ and $13212$ number of influenced nodes, which is almost $1.9 \%$ more.

\begin{figure*}[t]
\centering
\begin{tabular}{ccc}
\includegraphics[scale=0.2]{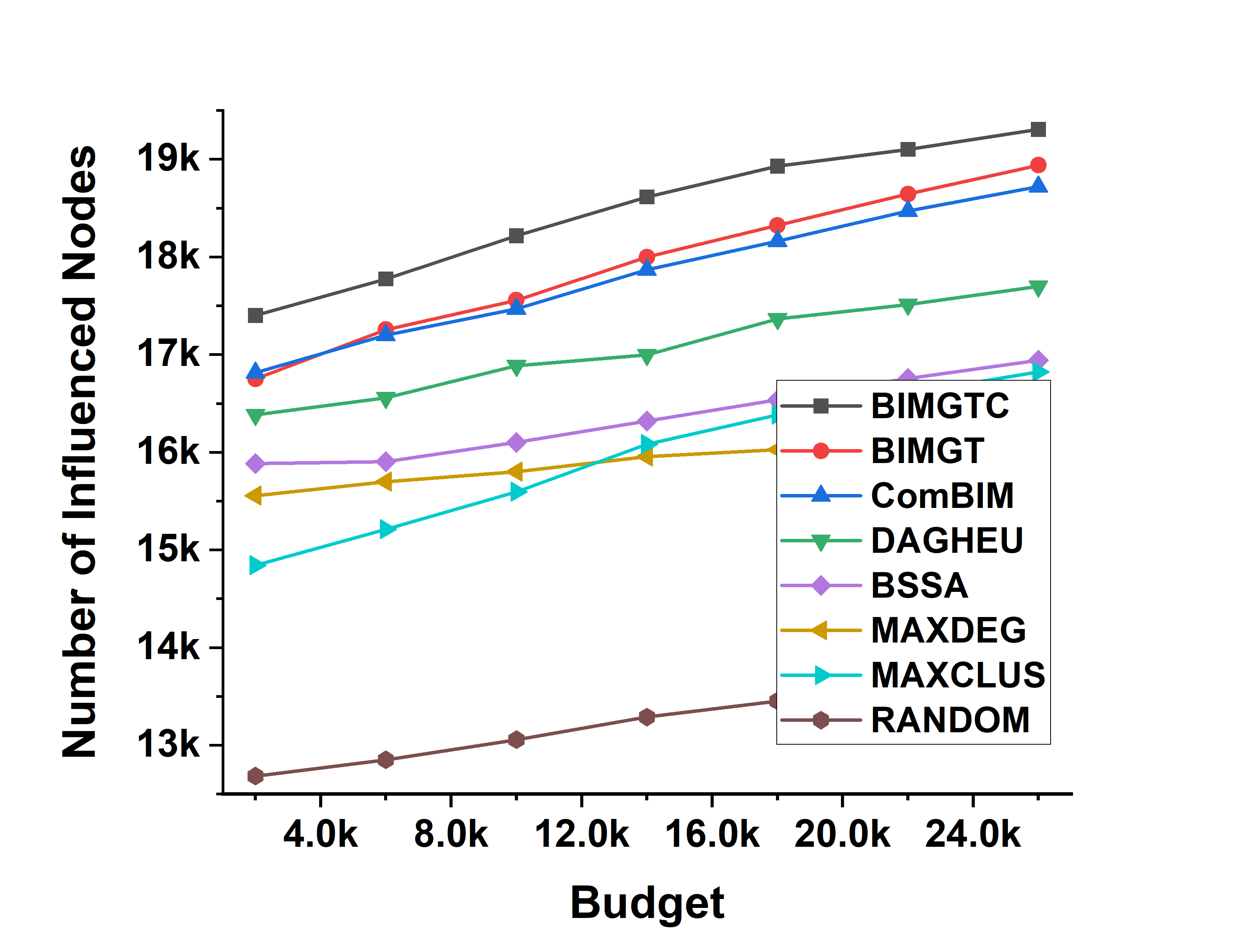} & \includegraphics[scale=0.2]{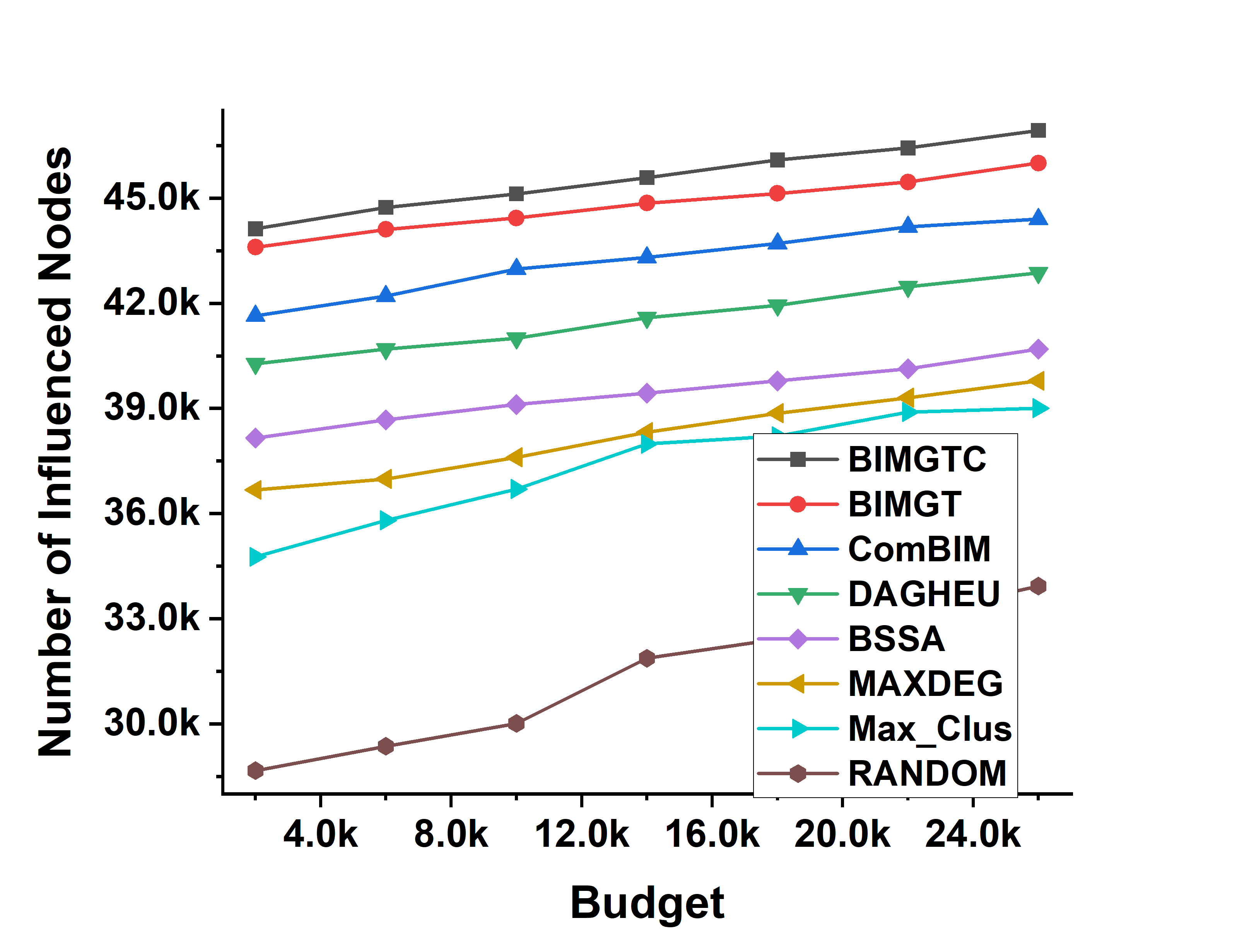} \\
(a) Epinions ($p_c=0.1$)  & (b) Epinions  ($p_c=0.2$) \\
\includegraphics[scale=0.2]{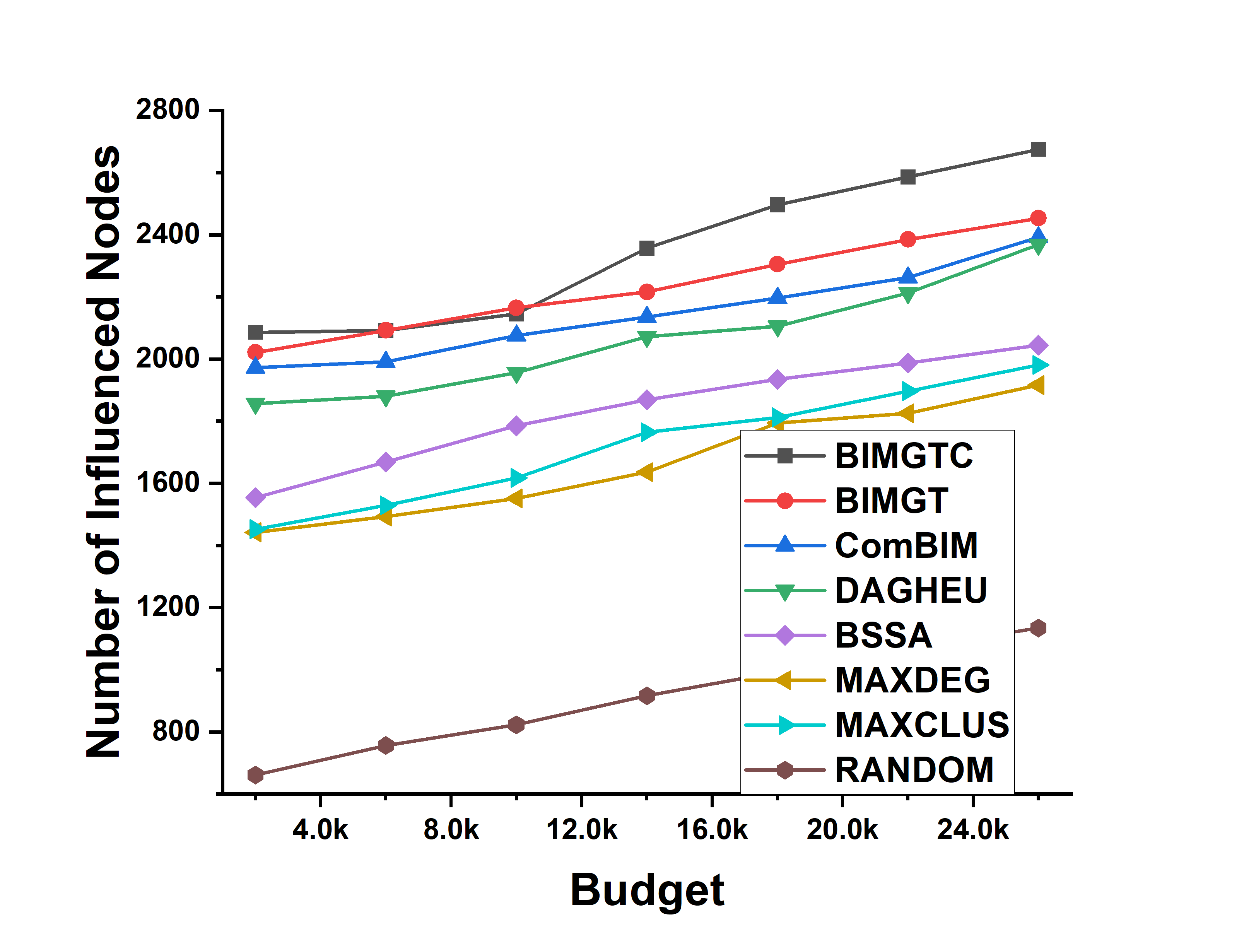} & \includegraphics[scale=0.2]{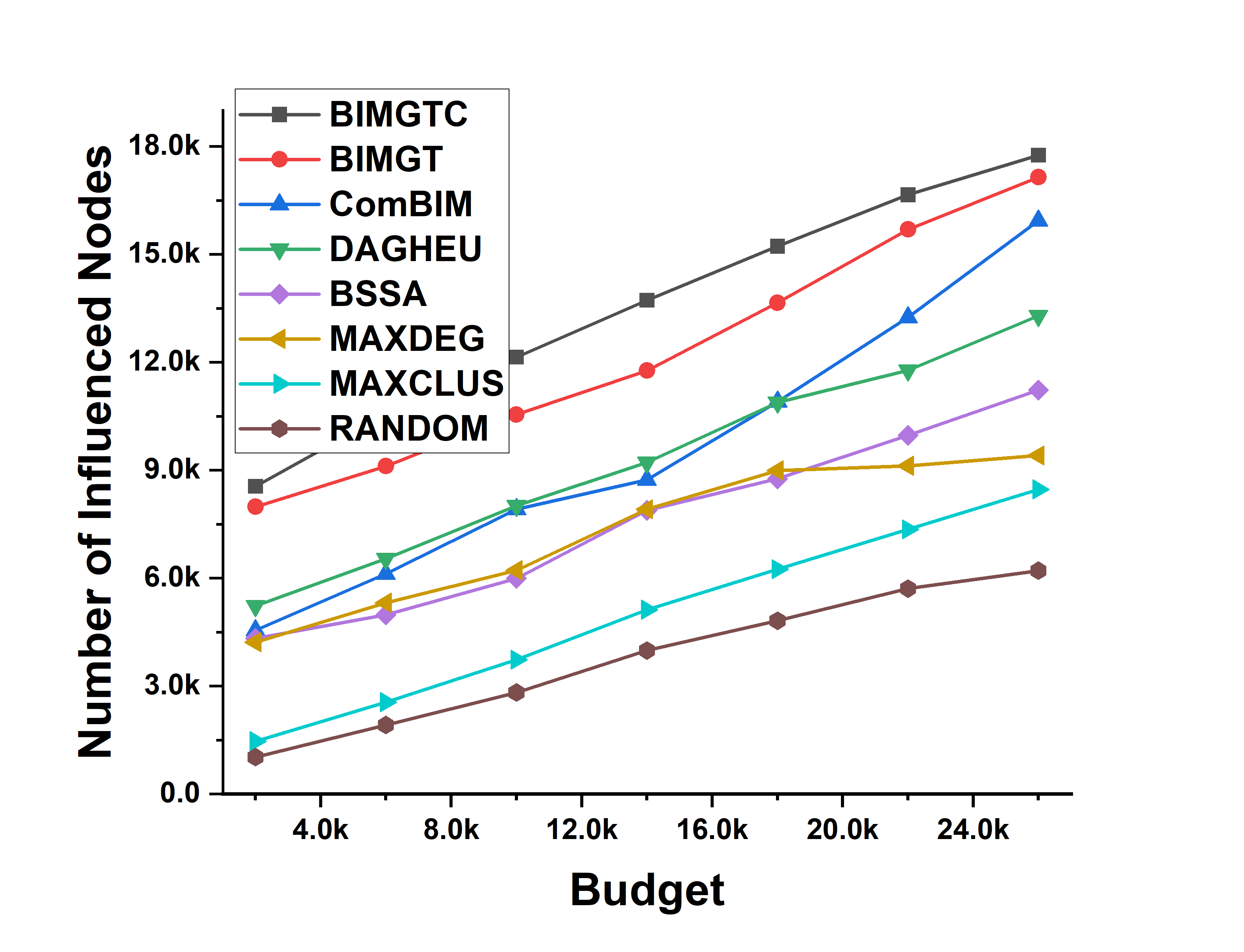} \\

(d) Epinions (Trivalancy) & (e) Epinions  (Weighted Cascade) \\
\end{tabular}
\caption{Budget Vs. Number of Influence Plots for the \textbf{Epinions}  dataset under the Uniform (with $p_c=0.1$ and $0.2$), Trivalency, Weighted Cascade and Count Probability settings.}
\label{Fig:Epinions}
\end{figure*}

Figure \ref{Fig:Epinions} shows the budget vs. number of influenced nodes plot for the Epinions dataset. Like previous two datasets, in this dataset also we observe that the seed set selected by the proposed methodologies leads to the more number of influenced nodes compared to the baseline methods. As an example in case of uniform probability setting when $p_c=0.2$ and $\mathcal{B}=26000$, among the proposed methodologies the seed set selected by the \textbf{BIMGTC} leads to the  maximum number of influenced nodes which is $46934$. On the other hand, among the existing  methods the seed set selected by \textbf{ComBIM} leads to the maximum number of influenced nodes which is $44402$. From Table \ref{Tab:Data_Stat}, it can be observed that the number of nodes of the Epinions dataset is $75879$. Hence, the percentage of nodes influenced by these seed nodes are $61.8 \%$ and $58.5 \%$, respectively. So, there is an approximate gap of $3.3 \%$ in terms of expected influence. It is also important to observe that due to the exploitation of the community structure of the input network, the number of influenced nodes increases. As an example when $p_c=0.2$ and $\mathcal{B}=26000$, the number of influenced nodes by the seed set selected by \textbf{BIMGTC} and \textbf{BIMGT} are $46934$ and $45997$, respectively. Next, we proceed to discuss computational time requirement.

\subsubsection{Computational Time} 
Table \ref{Tab:Time} shows the execution time of different algorithms for seed set selection. From this table it has been observed that the RANDOM takes the least amount of time across all the datasets. Next, MAXDEG takes more time than random as it needs to compute the degree of the nodes. MAXCLUS takes even more time than MAXDEG because computing clustering coefficient is much more computationally expensive operation than the degree. Remaining existing methods takes much more time than the baseline methods. It is important to observe that the computational time requirement of both the proposed methodologies are much less than that of the DAG-Based heuristic. It has been observed that the proposed methodologies takes more computational time compared to some of the baseline methods. However, it is important to realize that in many practical applications including viral marketing, computational advertisement etc., it is important to have an algorithm for seed set selection which has reasonable computational time with significant influence coverage. In this aspect, the proposed methodologies are far ahead compared to many existing methods.

\begin{center}
\begin{table}[h]
 \caption{Computational time requirement (in Secs.) for seed set selection for HEP\mbox{-}THOY, COND\mbox{-}MAT and Epinions Dataset}
    \label{Tab:Time}
\resizebox{0.9 \textwidth}{!}{ 
\begin{tabular}{ | c | c | c | c | c | c | c | c | c | c |}
    \hline
    \multirow{ 2}{*}{\textbf{Dataset}} & \multirow{ 2}{*}{\textbf{Budget}} &\multicolumn{8}{|c|}{\textbf{Algorithm}}\\
    \cline{3-10} 
     & & \textbf{BIMGTC} & \textbf{BIMGT} & \textbf{ComBIM} & \textbf{MAX\_DEG} & \textbf{MAX\_CLUS} & \textbf{RANDOM} & \textbf{BSSA} & \textbf{DAGHEU}  \\ \hline
  \multirow{ 7}{*}{HEP\mbox{-}THOY}  &2000 & 344 & 326 & 102 & 0.0464  & 0.3535  & 0.0032 & 22 & 1225 \\ \cline{2-10} 
    &6000 & 354 & 342 & 104  & 0.0488   & 0.3674  & 0.0085 & 26 & 1242  \\ \cline{2-10} 
    &10000 & 362 & 349& 104  & 0.0483   & 0.3666  & 0.0092 & 24 & 1251  \\ \cline{2-10} 
    &14000 & 364 & 348 & 104  & 0.0497  & 0.3556  & 0.0192  & 23 & 1258   \\ \cline{2-10} 
    &18000 & 372 & 352& 104  & 0.0492   & 0.3608  & 0.0269 & 25 & 1278  \\ \cline{2-10}
    &22000 & 389 &364 & 104  & 0.0503   & 0.3543  & 0.0333 & 28 & 1283  \\ \cline{2-10}
    &26000 & 395 & 368& 105  & 0.0496  & 0.3557  & 0.039  & 33 & 1299  \\ \cline{2-10}
    \hline
    
      \multirow{ 7}{*}{COND\mbox{-}MAT}   &   2000 & 810 & 785 & 345  & 0.0749 & 0.8999 & 0.0031 & 35 & 3640 \\ \cline{2-10} 
    &6000 & 812 & 796& 344  & 0.0758 & 0.8964 & 0.0119 & 36 & 3676  \\ \cline{2-10} 
    &10000 & 824 & 810& 344  & 0.0764 & 0.8968 & 0.0185 & 39 & 3697 \\ \cline{2-10} 
    &14000 & 835 & 826& 344  & 0.0772 & 0.8979 & 0.0311 & 38 & 3723  \\ \cline{2-10} 
    &18000 & 846 & 832& 349 & 0.0776  & 0.8981 & 0.0401 & 46 & 3761  \\ \cline{2-10} 
    &22000 & 858 & 844& 349 & 0.0784  & 0.9004 & 0.0450 & 42 & 3800  \\  \cline{2-10} 
    &26000 & 866 & 852 & 350 & 0.0778 & 0.9034  & 0.0541 & 44 & 3866  \\
    \hline
      \multirow{ 7}{*}{Epinions}  &    2000 & 1482 & 1456 & 1052 & 0.2649 &  199 & 0.0081 & 66 & 14620 \\ \cline{2-10} 
   & 6000 & 1521 & 1490& 1051 & 0.2758 & 201 & 0.0219 & 72 & 14751 \\ \cline{2-10} 
   &10000 & 1532 & 1506 & 1082 & 0.3821 & 226 & 0.0234 & 75 & 14762 \\ \cline{2-10} 
   & 14000 & 1546 & 1523& 1079 & 0.3422 & 225 & 0.0312 & 78 & 14463  \\ \cline{2-10} 
   & 18000 & 1562 & 1540& 1116 & 0.3776 & 238 & 0.0427 & 81 & 14561 \\ \cline{2-10} 
   & 22000 & 1575 & 1556& 1131 & 0.5261 & 231 & 0.0392 & 79 & 14600 \\ \cline{2-10} 
   & 26000 & 1596 & 1574& 1139 & 0.5116 & 229 & 0.0798 & 92 & 14786 \\
    \hline
    \end{tabular}
    }
\end{table}
\end{center}

\section{Conclusion and Future Direction} \label{Sec:CFD}
In this paper, we have proposed a co\mbox{-}operative game theoretic framework for the budgeted influence maximization problem. Particularly, we formulate a co\mbox{-}operative game, where the users of the network are the players, and for any subset of the players, their utility is defined as the expected influence by the users of the subset under the MIA Model of diffusion. We have used the solution concept called `shapley value' and proposed an iterative algorithm for selecting influential users in the network. We have also shown that the proposed methodology can select better quality seed set if the community structure of the network is exploited. Experiments with real\mbox{-}world social network datasets demonstrate the superiority of the proposed methodologies. Now, this study can be extended in different directions. First of all, we have not given any approximation guarantee of our proposed methodologies with respect to an optimal seed set. It will be interesting to come up with some worst case performance guarantee for our proposed methodologies. We have not considered the time\mbox{-}varying nature of the social network. There are many other solution concepts of co\mbox{-}operative game, such as Banzhaf index \cite{dubey1979mathematical} etc. These solution concepts can be used instead of `shapley value' and compare the performance with the proposed methodologies.
\bibliography{paper}

\begin{thebibliography}{58}
\expandafter\ifx\csname natexlab\endcsname\relax\def\natexlab#1{#1}\fi
\providecommand{\url}[1]{\texttt{#1}}
\providecommand{\href}[2]{#2}
\providecommand{\path}[1]{#1}
\providecommand{\DOIprefix}{doi:}
\providecommand{\ArXivprefix}{arXiv:}
\providecommand{\URLprefix}{URL: }
\providecommand{\Pubmedprefix}{pmid:}
\providecommand{\doi}[1]{\href{http://dx.doi.org/#1}{\path{#1}}}
\providecommand{\Pubmed}[1]{\href{pmid:#1}{\path{#1}}}
\providecommand{\bibinfo}[2]{#2}
\ifx\xfnm\relax \def\xfnm[#1]{\unskip,\space#1}\fi
\bibitem[{Angell and Schoenebeck(2017)}]{angell2017don}
\bibinfo{author}{Angell, R.}, \bibinfo{author}{Schoenebeck, G.},
  \bibinfo{year}{2017}.
\newblock \bibinfo{title}{Don’t be greedy: Leveraging community structure to
  find high quality seed sets for influence maximization}, in:
  \bibinfo{booktitle}{International Conference on Web and Internet Economics},
  \bibinfo{organization}{Springer}. pp. \bibinfo{pages}{16--29}.
\bibitem[{Aslay et~al.(2015)Aslay, Lu, Bonchi, Goyal and
  Lakshmanan}]{aslay2015viral}
\bibinfo{author}{Aslay, C.}, \bibinfo{author}{Lu, W.}, \bibinfo{author}{Bonchi,
  F.}, \bibinfo{author}{Goyal, A.}, \bibinfo{author}{Lakshmanan, L.V.},
  \bibinfo{year}{2015}.
\newblock \bibinfo{title}{Viral marketing meets social advertising: Ad
  allocation with minimum regret}.
\newblock \bibinfo{journal}{Proceedings of the VLDB Endowment}
  \bibinfo{volume}{8}, \bibinfo{pages}{814--825}.
\bibitem[{Banerjee et~al.(2019)Banerjee, Jenamani and
  Pratihar}]{banerjee2019combim}
\bibinfo{author}{Banerjee, S.}, \bibinfo{author}{Jenamani, M.},
  \bibinfo{author}{Pratihar, D.K.}, \bibinfo{year}{2019}.
\newblock \bibinfo{title}{Combim: A community-based solution approach for the
  budgeted influence maximization problem}.
\newblock \bibinfo{journal}{Expert Systems with Applications}
  \bibinfo{volume}{125}, \bibinfo{pages}{1--13}.
\bibitem[{Banerjee et~al.(2020)Banerjee, Jenamani and
  Pratihar}]{banerjee2020survey}
\bibinfo{author}{Banerjee, S.}, \bibinfo{author}{Jenamani, M.},
  \bibinfo{author}{Pratihar, D.K.}, \bibinfo{year}{2020}.
\newblock \bibinfo{title}{A survey on influence maximization in a social
  network}.
\newblock \bibinfo{journal}{Knowledge and Information Systems} ,
  \bibinfo{pages}{1--39}.
\bibitem[{Blondel et~al.(2008)Blondel, Guillaume, Lambiotte and
  Lefebvre}]{blondel2008fast}
\bibinfo{author}{Blondel, V.D.}, \bibinfo{author}{Guillaume, J.L.},
  \bibinfo{author}{Lambiotte, R.}, \bibinfo{author}{Lefebvre, E.},
  \bibinfo{year}{2008}.
\newblock \bibinfo{title}{Fast unfolding of communities in large networks}.
\newblock \bibinfo{journal}{Journal of statistical mechanics: theory and
  experiment} \bibinfo{volume}{2008}, \bibinfo{pages}{P10008}.
\bibitem[{Borodin et~al.(2010)Borodin, Filmus and Oren}]{borodin2010threshold}
\bibinfo{author}{Borodin, A.}, \bibinfo{author}{Filmus, Y.},
  \bibinfo{author}{Oren, J.}, \bibinfo{year}{2010}.
\newblock \bibinfo{title}{Threshold models for competitive influence in social
  networks}, in: \bibinfo{booktitle}{International workshop on internet and
  network economics}, \bibinfo{organization}{Springer}. pp.
  \bibinfo{pages}{539--550}.
\bibitem[{Bozorgi et~al.(2016)Bozorgi, Haghighi, Zahedi and
  Rezvani}]{bozorgi2016incim}
\bibinfo{author}{Bozorgi, A.}, \bibinfo{author}{Haghighi, H.},
  \bibinfo{author}{Zahedi, M.S.}, \bibinfo{author}{Rezvani, M.},
  \bibinfo{year}{2016}.
\newblock \bibinfo{title}{Incim: A community-based algorithm for influence
  maximization problem under the linear threshold model}.
\newblock \bibinfo{journal}{Information Processing \& Management}
  \bibinfo{volume}{52}, \bibinfo{pages}{1188--1199}.
\bibitem[{Chen et~al.(2010a)Chen, Liu, Sun and Wang}]{chen2010game}
\bibinfo{author}{Chen, W.}, \bibinfo{author}{Liu, Z.}, \bibinfo{author}{Sun,
  X.}, \bibinfo{author}{Wang, Y.}, \bibinfo{year}{2010}a.
\newblock \bibinfo{title}{A game-theoretic framework to identify overlapping
  communities in social networks}.
\newblock \bibinfo{journal}{Data Mining and Knowledge Discovery}
  \bibinfo{volume}{21}, \bibinfo{pages}{224--240}.
\bibitem[{Chen et~al.(2010b)Chen, Wang and Wang}]{chen2010scalable}
\bibinfo{author}{Chen, W.}, \bibinfo{author}{Wang, C.}, \bibinfo{author}{Wang,
  Y.}, \bibinfo{year}{2010}b.
\newblock \bibinfo{title}{Scalable influence maximization for prevalent viral
  marketing in large-scale social networks}, in:
  \bibinfo{booktitle}{Proceedings of the 16th ACM SIGKDD international
  conference on Knowledge discovery and data mining},
  \bibinfo{organization}{ACM}. pp. \bibinfo{pages}{1029--1038}.
\bibitem[{Chen et~al.(2012)Chen, Chang, Chou, Peng and Lee}]{chen2012exploring}
\bibinfo{author}{Chen, Y.}, \bibinfo{author}{Chang, S.}, \bibinfo{author}{Chou,
  C.}, \bibinfo{author}{Peng, W.}, \bibinfo{author}{Lee, S.},
  \bibinfo{year}{2012}.
\newblock \bibinfo{title}{Exploring community structures for influence
  maximization in social networks}, in: \bibinfo{booktitle}{The 6th SNA-KDD
  Workshop on Social Network Mining and Analysis Held in Conjunction with KDD},
  pp. \bibinfo{pages}{1--6}.
\bibitem[{Chen et~al.(2014)Chen, Zhu, Peng, Lee and Lee}]{chen2014cim}
\bibinfo{author}{Chen, Y.C.}, \bibinfo{author}{Zhu, W.Y.},
  \bibinfo{author}{Peng, W.C.}, \bibinfo{author}{Lee, W.C.},
  \bibinfo{author}{Lee, S.Y.}, \bibinfo{year}{2014}.
\newblock \bibinfo{title}{Cim: Community-based influence maximization in social
  networks}.
\newblock \bibinfo{journal}{ACM Transactions on Intelligent Systems and
  Technology (TIST)} \bibinfo{volume}{5}, \bibinfo{pages}{25}.
\bibitem[{Clark and Poovendran(2011)}]{clark2011maximizing}
\bibinfo{author}{Clark, A.}, \bibinfo{author}{Poovendran, R.},
  \bibinfo{year}{2011}.
\newblock \bibinfo{title}{Maximizing influence in competitive environments: A
  game-theoretic approach}, in: \bibinfo{booktitle}{International Conference on
  Decision and Game Theory for Security}, \bibinfo{organization}{Springer}. pp.
  \bibinfo{pages}{151--162}.
\bibitem[{Cohen et~al.(2014)Cohen, Delling, Pajor and
  Werneck}]{cohen2014sketch}
\bibinfo{author}{Cohen, E.}, \bibinfo{author}{Delling, D.},
  \bibinfo{author}{Pajor, T.}, \bibinfo{author}{Werneck, R.F.},
  \bibinfo{year}{2014}.
\newblock \bibinfo{title}{Sketch-based influence maximization and computation:
  Scaling up with guarantees}, in: \bibinfo{booktitle}{Proceedings of the 23rd
  ACM International Conference on Conference on Information and Knowledge
  Management}, pp. \bibinfo{pages}{629--638}.
\bibitem[{Cormen et~al.(2009)Cormen, Leiserson, Rivest and
  Stein}]{cormen2009introduction}
\bibinfo{author}{Cormen, T.H.}, \bibinfo{author}{Leiserson, C.E.},
  \bibinfo{author}{Rivest, R.L.}, \bibinfo{author}{Stein, C.},
  \bibinfo{year}{2009}.
\newblock \bibinfo{title}{Introduction to algorithms}.
\newblock \bibinfo{publisher}{MIT press}.
\bibitem[{Ding et~al.(2010)Ding, Liu, Shen and Si}]{ding2010evolutionary}
\bibinfo{author}{Ding, F.}, \bibinfo{author}{Liu, Y.}, \bibinfo{author}{Shen,
  B.}, \bibinfo{author}{Si, X.M.}, \bibinfo{year}{2010}.
\newblock \bibinfo{title}{An evolutionary game theory model of binary opinion
  formation}.
\newblock \bibinfo{journal}{Physica A: Statistical Mechanics and its
  Applications} \bibinfo{volume}{389}, \bibinfo{pages}{1745--1752}.
\bibitem[{Domingos(2005)}]{domingos2005mining}
\bibinfo{author}{Domingos, P.}, \bibinfo{year}{2005}.
\newblock \bibinfo{title}{Mining social networks for viral marketing}.
\newblock \bibinfo{journal}{IEEE Intelligent Systems} \bibinfo{volume}{20},
  \bibinfo{pages}{80--82}.
\bibitem[{Domingos and Richardson(2001)}]{domingos2001mining}
\bibinfo{author}{Domingos, P.}, \bibinfo{author}{Richardson, M.},
  \bibinfo{year}{2001}.
\newblock \bibinfo{title}{Mining the network value of customers}, in:
  \bibinfo{booktitle}{Proceedings of the seventh ACM SIGKDD international
  conference on Knowledge discovery and data mining},
  \bibinfo{organization}{ACM}. pp. \bibinfo{pages}{57--66}.
\bibitem[{Dubey and Shapley(1979)}]{dubey1979mathematical}
\bibinfo{author}{Dubey, P.}, \bibinfo{author}{Shapley, L.S.},
  \bibinfo{year}{1979}.
\newblock \bibinfo{title}{Mathematical properties of the banzhaf power index}.
\newblock \bibinfo{journal}{Mathematics of Operations Research}
  \bibinfo{volume}{4}, \bibinfo{pages}{99--131}.
\bibitem[{Goyal et~al.(2011a)Goyal, Lu and Lakshmanan}]{goyal2011celf}
\bibinfo{author}{Goyal, A.}, \bibinfo{author}{Lu, W.},
  \bibinfo{author}{Lakshmanan, L.V.}, \bibinfo{year}{2011}a.
\newblock \bibinfo{title}{Celf++: optimizing the greedy algorithm for influence
  maximization in social networks}, in: \bibinfo{booktitle}{Proceedings of the
  20th international conference companion on World wide web},
  \bibinfo{organization}{ACM}. pp. \bibinfo{pages}{47--48}.
\bibitem[{Goyal et~al.(2011b)Goyal, Lu and Lakshmanan}]{goyal2011simpath}
\bibinfo{author}{Goyal, A.}, \bibinfo{author}{Lu, W.},
  \bibinfo{author}{Lakshmanan, L.V.}, \bibinfo{year}{2011}b.
\newblock \bibinfo{title}{Simpath: An efficient algorithm for influence
  maximization under the linear threshold model}, in: \bibinfo{booktitle}{2011
  IEEE 11th international conference on data mining},
  \bibinfo{organization}{IEEE}. pp. \bibinfo{pages}{211--220}.
\bibitem[{Guille et~al.(2013)Guille, Hacid, Favre and
  Zighed}]{guille2013information}
\bibinfo{author}{Guille, A.}, \bibinfo{author}{Hacid, H.},
  \bibinfo{author}{Favre, C.}, \bibinfo{author}{Zighed, D.A.},
  \bibinfo{year}{2013}.
\newblock \bibinfo{title}{Information diffusion in online social networks: A
  survey}.
\newblock \bibinfo{journal}{ACM Sigmod Record} \bibinfo{volume}{42},
  \bibinfo{pages}{17--28}.
\bibitem[{G{\"u}ney(2019)}]{guney2019optimal}
\bibinfo{author}{G{\"u}ney, E.}, \bibinfo{year}{2019}.
\newblock \bibinfo{title}{On the optimal solution of budgeted influence
  maximization problem in social networks}.
\newblock \bibinfo{journal}{Operational Research} \bibinfo{volume}{19},
  \bibinfo{pages}{817--831}.
\bibitem[{HAFIENE et~al.(2020)HAFIENE, KAROUI and
  ROMDHANE}]{hafiene2020influential}
\bibinfo{author}{HAFIENE, N.}, \bibinfo{author}{KAROUI, W.},
  \bibinfo{author}{ROMDHANE, L.B.}, \bibinfo{year}{2020}.
\newblock \bibinfo{title}{Influential nodes detection in dynamic social
  networks: A survey}.
\newblock \bibinfo{journal}{Expert Systems with Applications} ,
  \bibinfo{pages}{113642}.
\bibitem[{Han et~al.(2014)Han, Zhuang, He and Shi}]{han2014balanced}
\bibinfo{author}{Han, S.}, \bibinfo{author}{Zhuang, F.}, \bibinfo{author}{He,
  Q.}, \bibinfo{author}{Shi, Z.}, \bibinfo{year}{2014}.
\newblock \bibinfo{title}{Balanced seed selection for budgeted influence
  maximization in social networks}, in: \bibinfo{booktitle}{Pacific-Asia
  Conference on Knowledge Discovery and Data Mining},
  \bibinfo{organization}{Springer}. pp. \bibinfo{pages}{65--77}.
\bibitem[{Hong and Liu(2019)}]{hong2019seeds}
\bibinfo{author}{Hong, T.}, \bibinfo{author}{Liu, Q.}, \bibinfo{year}{2019}.
\newblock \bibinfo{title}{Seeds selection for spreading in a weighted cascade
  model}.
\newblock \bibinfo{journal}{Physica A: Statistical Mechanics and its
  Applications} \bibinfo{volume}{526}, \bibinfo{pages}{120943}.
\bibitem[{Hossain(2012)}]{hossain2012users}
\bibinfo{author}{Hossain, M.}, \bibinfo{year}{2012}.
\newblock \bibinfo{title}{Users' motivation to participate in online
  crowdsourcing platforms}, in: \bibinfo{booktitle}{2012 International
  Conference on Innovation Management and Technology Research},
  \bibinfo{organization}{IEEE}. pp. \bibinfo{pages}{310--315}.
\bibitem[{Jung et~al.(2012)Jung, Heo and Chen}]{jung2012irie}
\bibinfo{author}{Jung, K.}, \bibinfo{author}{Heo, W.}, \bibinfo{author}{Chen,
  W.}, \bibinfo{year}{2012}.
\newblock \bibinfo{title}{Irie: Scalable and robust influence maximization in
  social networks}, in: \bibinfo{booktitle}{2012 IEEE 12th International
  Conference on Data Mining}, \bibinfo{organization}{IEEE}. pp.
  \bibinfo{pages}{918--923}.
\bibitem[{Ke et~al.(2018)Ke, Khan and Cong}]{ke2018finding}
\bibinfo{author}{Ke, X.}, \bibinfo{author}{Khan, A.}, \bibinfo{author}{Cong,
  G.}, \bibinfo{year}{2018}.
\newblock \bibinfo{title}{Finding seeds and relevant tags jointly: For targeted
  influence maximization in social networks}, in:
  \bibinfo{booktitle}{Proceedings of the 2018 International Conference on
  Management of Data}, pp. \bibinfo{pages}{1097--1111}.
\bibitem[{Kempe et~al.(2003)Kempe, Kleinberg and Tardos}]{kempe2003maximizing}
\bibinfo{author}{Kempe, D.}, \bibinfo{author}{Kleinberg, J.},
  \bibinfo{author}{Tardos, {\'E}.}, \bibinfo{year}{2003}.
\newblock \bibinfo{title}{Maximizing the spread of influence through a social
  network}, in: \bibinfo{booktitle}{Proceedings of the ninth ACM SIGKDD
  international conference on Knowledge discovery and data mining},
  \bibinfo{organization}{ACM}. pp. \bibinfo{pages}{137--146}.
\bibitem[{Kempe et~al.(2005)Kempe, Kleinberg and Tardos}]{kempe2005influential}
\bibinfo{author}{Kempe, D.}, \bibinfo{author}{Kleinberg, J.},
  \bibinfo{author}{Tardos, {\'E}.}, \bibinfo{year}{2005}.
\newblock \bibinfo{title}{Influential nodes in a diffusion model for social
  networks}, in: \bibinfo{booktitle}{International Colloquium on Automata,
  Languages, and Programming}, \bibinfo{organization}{Springer}. pp.
  \bibinfo{pages}{1127--1138}.
\bibitem[{Kempe et~al.(2015)Kempe, Kleinberg and
  Tardos}]{DBLP:journals/toc/KempeKT15}
\bibinfo{author}{Kempe, D.}, \bibinfo{author}{Kleinberg, J.M.},
  \bibinfo{author}{Tardos, {\'{E}}.}, \bibinfo{year}{2015}.
\newblock \bibinfo{title}{Maximizing the spread of influence through a social
  network}.
\newblock \bibinfo{journal}{Theory of Computing} \bibinfo{volume}{11},
  \bibinfo{pages}{105--147}.
\newblock \URLprefix \url{https://doi.org/10.4086/toc.2015.v011a004},
  \DOIprefix\doi{10.4086/toc.2015.v011a004}.
\bibitem[{Kostka et~al.(2008)Kostka, Oswald and Wattenhofer}]{kostka2008word}
\bibinfo{author}{Kostka, J.}, \bibinfo{author}{Oswald, Y.A.},
  \bibinfo{author}{Wattenhofer, R.}, \bibinfo{year}{2008}.
\newblock \bibinfo{title}{Word of mouth: Rumor dissemination in social
  networks}, in: \bibinfo{booktitle}{International colloquium on structural
  information and communication complexity}, \bibinfo{organization}{Springer}.
  pp. \bibinfo{pages}{185--196}.
\bibitem[{Leskovec et~al.(2007a)Leskovec, Kleinberg and
  Faloutsos}]{leskovec2007graph}
\bibinfo{author}{Leskovec, J.}, \bibinfo{author}{Kleinberg, J.},
  \bibinfo{author}{Faloutsos, C.}, \bibinfo{year}{2007}a.
\newblock \bibinfo{title}{Graph evolution: Densification and shrinking
  diameters}.
\newblock \bibinfo{journal}{ACM Transactions on Knowledge Discovery from Data
  (TKDD)} \bibinfo{volume}{1}, \bibinfo{pages}{2}.
\bibitem[{Leskovec et~al.(2007b)Leskovec, Krause, Guestrin, Faloutsos,
  Faloutsos, VanBriesen and Glance}]{leskovec2007cost}
\bibinfo{author}{Leskovec, J.}, \bibinfo{author}{Krause, A.},
  \bibinfo{author}{Guestrin, C.}, \bibinfo{author}{Faloutsos, C.},
  \bibinfo{author}{Faloutsos, C.}, \bibinfo{author}{VanBriesen, J.},
  \bibinfo{author}{Glance, N.}, \bibinfo{year}{2007}b.
\newblock \bibinfo{title}{Cost-effective outbreak detection in networks}, in:
  \bibinfo{booktitle}{Proceedings of the 13th ACM SIGKDD international
  conference on Knowledge discovery and data mining},
  \bibinfo{organization}{ACM}. pp. \bibinfo{pages}{420--429}.
\bibitem[{Li et~al.(2015)Li, Bhowmick, Cui, Gao and Ma}]{li2015getreal}
\bibinfo{author}{Li, H.}, \bibinfo{author}{Bhowmick, S.S.},
  \bibinfo{author}{Cui, J.}, \bibinfo{author}{Gao, Y.}, \bibinfo{author}{Ma,
  J.}, \bibinfo{year}{2015}.
\newblock \bibinfo{title}{Getreal: Towards realistic selection of influence
  maximization strategies in competitive networks}, in:
  \bibinfo{booktitle}{Proceedings of the 2015 ACM SIGMOD international
  conference on management of data}, pp. \bibinfo{pages}{1525--1537}.
\bibitem[{Li et~al.(2017)Li, Wang, Gao and Zhang}]{li2017survey}
\bibinfo{author}{Li, M.}, \bibinfo{author}{Wang, X.}, \bibinfo{author}{Gao,
  K.}, \bibinfo{author}{Zhang, S.}, \bibinfo{year}{2017}.
\newblock \bibinfo{title}{A survey on information diffusion in online social
  networks: Models and methods}.
\newblock \bibinfo{journal}{Information} \bibinfo{volume}{8},
  \bibinfo{pages}{118}.
\bibitem[{Li et~al.(2018)Li, Fan, Wang and Tan}]{li2018influence}
\bibinfo{author}{Li, Y.}, \bibinfo{author}{Fan, J.}, \bibinfo{author}{Wang,
  Y.}, \bibinfo{author}{Tan, K.L.}, \bibinfo{year}{2018}.
\newblock \bibinfo{title}{Influence maximization on social graphs: A survey}.
\newblock \bibinfo{journal}{IEEE Transactions on Knowledge and Data
  Engineering} \bibinfo{volume}{30}, \bibinfo{pages}{1852--1872}.
\bibitem[{Logins and Karras(2019)}]{logins2019content}
\bibinfo{author}{Logins, A.}, \bibinfo{author}{Karras, P.},
  \bibinfo{year}{2019}.
\newblock \bibinfo{title}{Content-based network influence probabilities:
  Extraction and application}, in: \bibinfo{booktitle}{2019 International
  Conference on Data Mining Workshops (ICDMW)}, \bibinfo{organization}{IEEE}.
  pp. \bibinfo{pages}{69--72}.
\bibitem[{Maleki et~al.(2013)Maleki, Tran-Thanh, Hines, Rahwan and
  Rogers}]{maleki2013bounding}
\bibinfo{author}{Maleki, S.}, \bibinfo{author}{Tran-Thanh, L.},
  \bibinfo{author}{Hines, G.}, \bibinfo{author}{Rahwan, T.},
  \bibinfo{author}{Rogers, A.}, \bibinfo{year}{2013}.
\newblock \bibinfo{title}{Bounding the estimation error of sampling-based
  shapley value approximation}.
\newblock \bibinfo{journal}{arXiv preprint arXiv:1306.4265} .
\bibitem[{Narahari(2014)}]{narahari2014game}
\bibinfo{author}{Narahari, Y.}, \bibinfo{year}{2014}.
\newblock \bibinfo{title}{Game theory and mechanism design}.
  volume~\bibinfo{volume}{4}.
\newblock \bibinfo{publisher}{World Scientific}.
\bibitem[{Narayanam and Narahari(2010)}]{narayanam2010shapley}
\bibinfo{author}{Narayanam, R.}, \bibinfo{author}{Narahari, Y.},
  \bibinfo{year}{2010}.
\newblock \bibinfo{title}{A shapley value-based approach to discover
  influential nodes in social networks}.
\newblock \bibinfo{journal}{IEEE Transactions on Automation Science and
  Engineering} \bibinfo{volume}{8}, \bibinfo{pages}{130--147}.
\bibitem[{Nguyen and Zheng(2013)}]{nguyen2013budgeted}
\bibinfo{author}{Nguyen, H.}, \bibinfo{author}{Zheng, R.},
  \bibinfo{year}{2013}.
\newblock \bibinfo{title}{On budgeted influence maximization in social
  networks}.
\newblock \bibinfo{journal}{IEEE Journal on Selected Areas in Communications}
  \bibinfo{volume}{31}, \bibinfo{pages}{1084--1094}.
\bibitem[{Rahimkhani et~al.(2015)Rahimkhani, Aleahmad, Rahgozar and
  Moeini}]{rahimkhani2015fast}
\bibinfo{author}{Rahimkhani, K.}, \bibinfo{author}{Aleahmad, A.},
  \bibinfo{author}{Rahgozar, M.}, \bibinfo{author}{Moeini, A.},
  \bibinfo{year}{2015}.
\newblock \bibinfo{title}{A fast algorithm for finding most influential people
  based on the linear threshold model}.
\newblock \bibinfo{journal}{Expert Systems with Applications}
  \bibinfo{volume}{42}, \bibinfo{pages}{1353--1361}.
\bibitem[{Richardson et~al.(2003)Richardson, Agrawal and
  Domingos}]{richardson2003trust}
\bibinfo{author}{Richardson, M.}, \bibinfo{author}{Agrawal, R.},
  \bibinfo{author}{Domingos, P.}, \bibinfo{year}{2003}.
\newblock \bibinfo{title}{Trust management for the semantic web}, in:
  \bibinfo{booktitle}{International semantic Web conference},
  \bibinfo{organization}{Springer}. pp. \bibinfo{pages}{351--368}.
\bibitem[{Richardson and Domingos(2002)}]{richardson2002mining}
\bibinfo{author}{Richardson, M.}, \bibinfo{author}{Domingos, P.},
  \bibinfo{year}{2002}.
\newblock \bibinfo{title}{Mining knowledge-sharing sites for viral marketing},
  in: \bibinfo{booktitle}{Proceedings of the eighth ACM SIGKDD international
  conference on Knowledge discovery and data mining},
  \bibinfo{organization}{ACM}. pp. \bibinfo{pages}{61--70}.
\bibitem[{Shang et~al.(2017)Shang, Zhou, Li, Liu and Wu}]{shang2017cofim}
\bibinfo{author}{Shang, J.}, \bibinfo{author}{Zhou, S.}, \bibinfo{author}{Li,
  X.}, \bibinfo{author}{Liu, L.}, \bibinfo{author}{Wu, H.},
  \bibinfo{year}{2017}.
\newblock \bibinfo{title}{Cofim: A community-based framework for influence
  maximization on large-scale networks}.
\newblock \bibinfo{journal}{Knowledge-Based Systems} \bibinfo{volume}{117},
  \bibinfo{pages}{88--100}.
\bibitem[{Shi et~al.(2019)Shi, Wang, Chen, Feng and Chen}]{shi2019post}
\bibinfo{author}{Shi, Q.}, \bibinfo{author}{Wang, C.}, \bibinfo{author}{Chen,
  J.}, \bibinfo{author}{Feng, Y.}, \bibinfo{author}{Chen, C.},
  \bibinfo{year}{2019}.
\newblock \bibinfo{title}{Post and repost: A holistic view of budgeted
  influence maximization}.
\newblock \bibinfo{journal}{Neurocomputing} \bibinfo{volume}{338},
  \bibinfo{pages}{92--100}.
\bibitem[{Tang et~al.(2015)Tang, Shi and Xiao}]{tang2015influence}
\bibinfo{author}{Tang, Y.}, \bibinfo{author}{Shi, Y.}, \bibinfo{author}{Xiao,
  X.}, \bibinfo{year}{2015}.
\newblock \bibinfo{title}{Influence maximization in near-linear time: A
  martingale approach}, in: \bibinfo{booktitle}{Proceedings of the 2015 ACM
  SIGMOD International Conference on Management of Data},
  \bibinfo{organization}{ACM}. pp. \bibinfo{pages}{1539--1554}.
\bibitem[{Tang et~al.(2014)Tang, Xiao and Shi}]{tang2014influence}
\bibinfo{author}{Tang, Y.}, \bibinfo{author}{Xiao, X.}, \bibinfo{author}{Shi,
  Y.}, \bibinfo{year}{2014}.
\newblock \bibinfo{title}{Influence maximization: Near-optimal time complexity
  meets practical efficiency}, in: \bibinfo{booktitle}{Proceedings of the 2014
  ACM SIGMOD international conference on Management of data},
  \bibinfo{organization}{ACM}. pp. \bibinfo{pages}{75--86}.
\bibitem[{Tong et~al.(2016)Tong, Wu, Tang and Du}]{tong2016adaptive}
\bibinfo{author}{Tong, G.}, \bibinfo{author}{Wu, W.}, \bibinfo{author}{Tang,
  S.}, \bibinfo{author}{Du, D.Z.}, \bibinfo{year}{2016}.
\newblock \bibinfo{title}{Adaptive influence maximization in dynamic social
  networks}.
\newblock \bibinfo{journal}{IEEE/ACM Transactions on Networking}
  \bibinfo{volume}{25}, \bibinfo{pages}{112--125}.
\bibitem[{Wang et~al.(2012)Wang, Chen and Wang}]{wang2012scalable}
\bibinfo{author}{Wang, C.}, \bibinfo{author}{Chen, W.}, \bibinfo{author}{Wang,
  Y.}, \bibinfo{year}{2012}.
\newblock \bibinfo{title}{Scalable influence maximization for independent
  cascade model in large-scale social networks}.
\newblock \bibinfo{journal}{Data Mining and Knowledge Discovery}
  \bibinfo{volume}{25}, \bibinfo{pages}{545--576}.
\bibitem[{Wang et~al.(2020)Wang, Yang, Xu and Truong}]{wang2020fast}
\bibinfo{author}{Wang, S.}, \bibinfo{author}{Yang, S.}, \bibinfo{author}{Xu,
  Z.}, \bibinfo{author}{Truong, V.A.}, \bibinfo{year}{2020}.
\newblock \bibinfo{title}{Fast thompson sampling algorithm with cumulative
  oversampling: Application to budgeted influence maximization}.
\newblock \bibinfo{journal}{arXiv preprint arXiv:2004.11963} .
\bibitem[{Wang and Yu(2020)}]{wang2020efficient}
\bibinfo{author}{Wang, S.B.Q.G.S.}, \bibinfo{author}{Yu, J.X.},
  \bibinfo{year}{2020}.
\newblock \bibinfo{title}{Efficient algorithms for budgeted influence
  maximization on massive social networks}.
\newblock \bibinfo{journal}{Proceedings of the VLDB Endowment}
  \bibinfo{volume}{13}.
\bibitem[{Wen and Deng(2020)}]{wen2020identification}
\bibinfo{author}{Wen, T.}, \bibinfo{author}{Deng, Y.}, \bibinfo{year}{2020}.
\newblock \bibinfo{title}{Identification of influencers in complex networks by
  local information dimensionality}.
\newblock \bibinfo{journal}{Information Sciences} \bibinfo{volume}{512},
  \bibinfo{pages}{549--562}.
\bibitem[{Wu et~al.(2014)Wu, Yang, Jiang, Jin and Xu}]{wu2014coritivity}
\bibinfo{author}{Wu, Y.}, \bibinfo{author}{Yang, Y.}, \bibinfo{author}{Jiang,
  F.}, \bibinfo{author}{Jin, S.}, \bibinfo{author}{Xu, J.},
  \bibinfo{year}{2014}.
\newblock \bibinfo{title}{Coritivity-based influence maximization in social
  networks}.
\newblock \bibinfo{journal}{Physica A: Statistical Mechanics and its
  Applications} \bibinfo{volume}{416}, \bibinfo{pages}{467--480}.
\bibitem[{Ye et~al.(2012)Ye, Liu and Lee}]{ye2012exploring}
\bibinfo{author}{Ye, M.}, \bibinfo{author}{Liu, X.}, \bibinfo{author}{Lee,
  W.C.}, \bibinfo{year}{2012}.
\newblock \bibinfo{title}{Exploring social influence for recommendation: a
  generative model approach}, in: \bibinfo{booktitle}{Proceedings of the 35th
  international ACM SIGIR conference on Research and development in information
  retrieval}, \bibinfo{organization}{ACM}. pp. \bibinfo{pages}{671--680}.
\bibitem[{Yu et~al.(2018)Yu, Li, Liao and Cui}]{yu2018fast}
\bibinfo{author}{Yu, Q.}, \bibinfo{author}{Li, H.}, \bibinfo{author}{Liao, Y.},
  \bibinfo{author}{Cui, S.}, \bibinfo{year}{2018}.
\newblock \bibinfo{title}{Fast budgeted influence maximization over
  multi-action event logs}.
\newblock \bibinfo{journal}{IEEE Access} \bibinfo{volume}{6},
  \bibinfo{pages}{14367--14378}.
\bibitem[{Zimmermann and Egu{\'\i}luz(2005)}]{zimmermann2005cooperation}
\bibinfo{author}{Zimmermann, M.G.}, \bibinfo{author}{Egu{\'\i}luz, V.M.},
  \bibinfo{year}{2005}.
\newblock \bibinfo{title}{Cooperation, social networks, and the emergence of
  leadership in a prisoner’s dilemma with adaptive local interactions}.
\newblock \bibinfo{journal}{Physical Review E} \bibinfo{volume}{72},
  \bibinfo{pages}{056118}.

\end{thebibliography}

\end{document}